\newtheorem{thm}{Theorem}
\newtheorem{lem}{Lemma}
\newcommand{\expect}[1]{\mathbb{E}\left\{#1\right\}}
\newcommand{\defequiv}{\mbox{\raisebox{-.3ex}{$\overset{\vartriangle}{=}$}}}
\begin{document}
\title{Network Capacity Region and Minimum Energy Function for a Delay-Tolerant Mobile Ad Hoc Network}
\author{Rahul~Urgaonkar and Michael~J.~Neely \\
\thanks{This work was supported
in part by the DARPA IT-MANET Program under Grant W911NF-07-0028 and
the National Science Foundation (NSF) under Grant OCE 0520324 and Career
Grant CCF-0747525. This work was presented in part at the 4th International
Symposium on Modeling and Optimization in Mobile, Ad Hoc, and Wireless
Networks (WiOpt), Boston, MA, April 3-7, 2006. 

The authors are with the Department of Electrical Engineering, University of
Southern California, Los Angeles, CA 90089 USA (e-mail: urgaonka@usc.edu;
mjneely@usc.edu).}}

\maketitle

\begin{abstract}
We investigate two quantities of interest in a delay-tolerant mobile ad hoc
network: the network capacity region and the minimum energy
function. The network capacity region is defined as 
the set of all input rates that the network can stably support
considering all possible scheduling and routing algorithms. Given
any input rate vector in this region, the minimum energy function
establishes the minimum time average power required to
support it. In this work, we consider a cell-partitioned model of a delay-tolerant mobile ad hoc network
with general Markovian mobility. This simple model incorporates the essential features of locality of
wireless transmissions as well as node mobility and
enables us to \emph{exactly} compute the corresponding network capacity and minimum energy function. Further,
we propose simple schemes that offer performance guarantees that are
arbitrarily close to these bounds at the cost of an increased delay.
\end{abstract}

\begin{keywords}
delay tolerant networks, mobile ad hoc network, capacity region, minimum energy scheduling, queueing analysis 
\end{keywords}

\section{Introduction}
\label{section:intro}

 Two quantities that characterize the performance limits of a mobile ad hoc network are
the network capacity region and the minimum energy function. The
network capacity region is defined as the set of all
input rates that the network can stably support considering all
possible scheduling and routing algorithms that conform to the given
network structure. The minimum energy function is defined as the minimum time average power (summed over all
users) required to stably support a given input rate vector in this
region. Here, by stability we mean that the input rates are such that for all users, the queues do not grow to
infinity and average delays are bounded. In this paper, we exactly compute these quantities for a
specific model of a delay-tolerant mobile ad hoc network.

Asymptotic bounds on the capacity of static wireless
networks and mobile networks are developed by \cite{gupta-kumar},
\cite{grossglauser-tse}. The work in \cite{grossglauser-tse}
shows that for networks with full uniform mobility, if delay constraints are relaxed, a simple $2$-hop
relay algorithm can support throughput that does not vanish as the
number of network nodes $N$ grows large.
Recent work in \cite{garetto-one} generalizes this model and investigates capacity scaling with non-uniform node mobility and 
heterogeneous nodes.
Capacity-delay tradeoffs in mobile ad hoc networks
are considered in \cite{neely-mobile, toumpis-goldsmith, shah, gaurav-sharma, x-lin}. Flow-based
characterization of the network capacity region is presented in several
works (e.g., \cite{neely-thesis}, \cite{padhye}, \cite{kodialam}).

However, little work has been done in computing the \emph{exact}
capacity and energy expressions for these networks. Exceptions
include a closed form expression for the capacity of a fixed grid
network in \cite{mergen-tong}, an expression for the exact information 
theoretic capacity for a single source multicast setting in a wireless erasure
network \cite{dana},
and an expression for the capacity of
a mobile ad hoc network in \cite{neely-mobile}
that uses a cell-partitioned structure. The work in \cite{neely-mobile} quantizes the network geography into a finite number of cells over
which users move, and assumes that a single packet can be transmitted
between users who are currently in the same cell, while no transmission
is possible between users currently in different cells.\footnote{Here a cell represents only
a sub-region of the network.  There are no base stations in the cells (i.e., this
should not be confused with ``cellular networks'' that use base stations.)}

In this work, we extend this model to more
general scenarios allowing adjacent cell communication and different
rate-power combinations. Specifically, we extend the 
simplified cell-partitioned model of \cite{neely-mobile} (which only allows same cell communication and
considers i.i.d. mobility) to treat adjacent cell communication. We
establish exact capacity expressions for general Markovian user
mobility processes (possibly non-uniform), assuming only a
well-defined steady-state location distribution for the users. {Our
analysis shows that, similar to \cite{neely-mobile}, the capacity is only a function of
the steady-state location distribution of the nodes and a $2$-hop relay
algorithm is throughput optimal for this extended model as well.}
Further, our analysis illuminates the optimal decision strategies
and precisely defines the throughput optimal control law for
choosing between same cell and adjacent cell communication. We then
use this insight to design a simple $2$-hop relay algorithm that can
stabilize the network for all input rates within the network
capacity region. We also compute an upper bound on the average delay
under this algorithm. (Sec. \ref{section:capacity})

We next compute the exact expression for the minimum energy required to
stabilize this network, for all input rates within capacity. Our
result demonstrates a piecewise linear structure for the minimum
energy function that corresponds to opportunistically using up
successive transmission modes. Then we present a {greedy} algorithm
whose average energy  can be pushed arbitrarily close to the minimum
energy at the cost of an increased delay. (Sec. \ref{section:energy})

Before proceeding further, we emphasize that the network capacity
and minimum energy function derived in this paper are subject to the
scheduling and routing constraints of our model as described in the next section. 
Specifically, in this work, we {do not} 
consider techniques that ``mix'' packets, such as network coding or
cooperative communication, which can increase the network
capacity and reduce energy costs.
In fact, in Sec. \ref{section:nc_gains}, we present an example scenario that shows how network coding in conjunction with
the wireless broadcast advantage can increase
the capacity for this model. Calculating the network capacity region and the minimum energy function
when these strategies are allowed is an open problem in general in network information theory and is 
beyond the scope of this paper.

\section{Network Model}
\label{section:model}

\begin{figure}
\centering
\includegraphics[width=5cm, height=5cm]{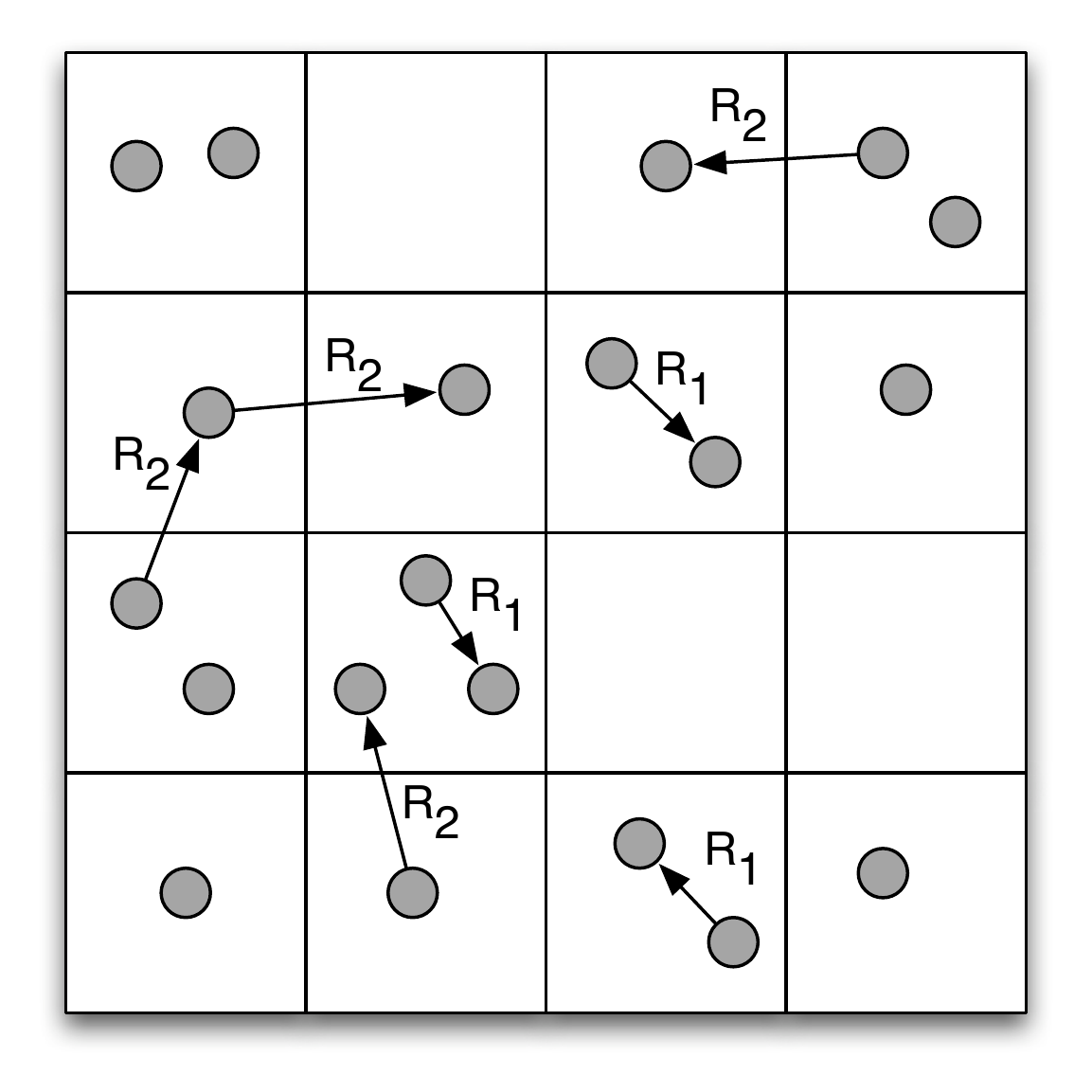}
\caption{An illustration of the cell-partitioned network with same
and adjacent cell communication. Cells
 that share an edge are assumed to be adjacent.}
\label{fig:1}
\end{figure}

\subsection{Cell-Partitioned Structure}
\label{section:cell}
We use a cell-partitioned network model (Fig. \ref{fig:1}) having ${C}$ non-overlapping cells
(not necessarily of the same size/shape). There are ${N}$ users
roaming from cell to cell over the network according to a mobility
process. Each cell ${c} \in \{1, 2, \ldots, {C}\}$ has a set of
adjacent cells $\mathcal{B}_c$ that a user can move into from cell
$c$. The maximum number of adjacent cells of any cell is bounded by a finite constant $J$.
We define the network user density as $\theta = {N}/{C}$
users/cell. For simplicity, ${N}$ is assumed to be even and $N \geq 2$.
Note that there could be ``gaps" in the cell structure due to
infeasible geographic locations. We assume that the gaps do not
partition the network, so that it is possible for a single user to
visit all cells. We assume $C$ is the number of valid cells, not
including such gaps.

\subsection{Mobility Model}
\label{section:mob_model}
Time is slotted so that each user remains in its current cell for a timeslot and potentially moves 
to an adjacent cell at the end of the slot. We assume that each user $i$ moves independently of the 
other users according to a mobility process that is described by a finite state ergodic 
Markov Chain. In particular, let $\mathbf{P}=\{P_{ij}\}_{{C}\times {C}}$ be the transition probability matrix 
of this Markov Chain. Then $P_{ij}$  represents the conditional probability that a user moves to cell 
$j$ in the current slot given that it was in cell $i$ in the last slot. Note that $P_{ij} > 0$ only 
if $j$ is an adjacent cell of $i$, i.e., $j \in \mathcal{B}_i$. It can be shown that the resulting 
mobility process has a well-defined steady-state location distribution $\boldsymbol{\pi} = \{ \pi_c \}_{1 \times {C}}$ 
over the cells $c \in \{1, 2, \ldots, {C}\}$ that satisfies $\boldsymbol{\pi} \mathbf{P} = 
\boldsymbol{\pi}$ and is the same for all users.
However, this distribution could be non-uniform over the cells. We assume that in each 
slot, users are aware of the set of other users in the same cell and in adjacent cells. However, 
the transition probabilities associated with the Markov Chain $\mathbf{P}$ are not necessarily known.

It can be shown (see, for example, \cite{ross}) that the mobility process discussed
above has the following property. Let $\chi(t) \in \{1, \ldots, C\}$ denote the location
of a user in timeslot $t$.  Then, for all integers $d > 0$, there
exist positive constants $\alpha, \gamma$ such that $\forall c \in \{1, 2, \ldots, {C}\}$, the following holds:
\begin{align}
 \pi_c(1 - \alpha \gamma^d) \leq Pr[\chi(t+d) = c | \chi(t)] \leq \pi_c(1 + \alpha \gamma^d)
\label{eq:renewal}
\end{align}
where $\alpha > 1$ and $0 < \gamma < 1$. Moreover, the decay factor $\gamma$ is
given by the second largest eigenvalue of the transition probability
matrix $\mathbf{P}$ (see \cite{seneta}). From this, it can be seen that for any $\epsilon >
0$, choosing $d = \lceil \frac{\log(\epsilon/\alpha)}{\log(\gamma)}\rceil$
ensures that the conditional probability that the user is in cell
$c$ at time $t+ d$ is within $\pi_c \epsilon$ of the steady-state
probability $\pi_c$ of being in cell $c$, irrespective of the
current location. This implies that the Markov Chain converges to
its steady-state probability distribution exponentially fast. 
Using the independence of user mobility processes, the following can be shown
about functionals of the joint user location process $\vec{\chi}(t)$:
\begin{lem}
Let $\vec{\chi}(t) = (\chi_1(t), \ldots, \chi_N(t))$ be the vector of current user
locations, where $\chi_i(t)$ represents the cell of user $i$
in slot $t$. 
Let $f(\vec{\chi}(t))$ be any non-negative function of $\vec{\chi}(t)$, i.e.,
$f(\vec{\chi}(t)) \geq 0 \;\forall \; \vec{\chi}(t)$. 
Define $f_{av}$ as the expectation of $f(\vec{\chi}(t))$ over the steady-state distribution of $\vec{\chi}(t)$:
\begin{align*}
f_{av} \defequiv \sum_{c_1, c_2, \ldots, c_N} f(c_1, \ldots, c_N) \prod_{i=1}^N \pi_{c_i}
\end{align*}
Then for all $d$ such that $\alpha \gamma^d \leq \frac{1}{N^2}$, we have
\begin{align*}
f_{av} (1 - 2N\alpha \gamma^d) \leq \expect{f(\vec{\chi}(t+d))| \vec{\chi}(t)} \leq f_{av} (1 + 2N\alpha \gamma^d) 
\end{align*}
\label{lem:lem1}
\end{lem}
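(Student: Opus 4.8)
The plan is to exploit that the $N$ user mobility processes are mutually independent, so that the conditional law of $\vec{\chi}(t+d)$ given $\vec{\chi}(t)$ factors over the users. First I would write the conditional expectation as a sum over joint location vectors,
\begin{align*}
\expect{f(\vec{\chi}(t+d)) | \vec{\chi}(t)} = \sum_{c_1, \ldots, c_N} f(c_1, \ldots, c_N) \prod_{i=1}^N Pr[\chi_i(t+d) = c_i | \chi_i(t)],
\end{align*}
and then apply the single-user convergence bound (\ref{eq:renewal}) to each of the $N$ factors, so that $\pi_{c_i}(1 - \alpha \gamma^d) \le Pr[\chi_i(t+d) = c_i | \chi_i(t)] \le \pi_{c_i}(1 + \alpha \gamma^d)$ for every user $i$ and every target cell $c_i$, valid for all $d$ large enough that $\alpha\gamma^d \le 1/N^2 < 1$.

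Since $f$ is non-negative and every summand is a product of non-negative quantities, I would multiply the per-user bounds together and carry them through the sum term by term: the product $\prod_i Pr[\chi_i(t+d) = c_i | \chi_i(t)]$ lies between $(1-\alpha\gamma^d)^N \prod_i \pi_{c_i}$ and $(1+\alpha\gamma^d)^N \prod_i \pi_{c_i}$. Hence, recognizing that $\sum_{c_1,\ldots,c_N} f(c_1,\ldots,c_N)\prod_i \pi_{c_i} = f_{av}$ by definition,
\begin{align*}
(1 - \alpha \gamma^d)^N f_{av} \le \expect{f(\vec{\chi}(t+d)) | \vec{\chi}(t)} \le (1 + \alpha \gamma^d)^N f_{av}.
\end{align*}
It then remains only to replace the factors $(1 \pm \alpha\gamma^d)^N$ by the claimed linear expressions $1 \pm 2N\alpha\gamma^d$, and this is precisely the step at which the hypothesis $\alpha\gamma^d \le 1/N^2$ (together with $N \ge 2$) is used.

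The lower-bound passage is immediate from Bernoulli's inequality: $(1-x)^N \ge 1 - Nx \ge 1 - 2Nx$ for $x \in [0,1]$. For the upper bound I would set $x = \alpha\gamma^d$ and note that $x \le 1/N^2$ and $N \ge 2$ give $Nx \le 1/N \le 1/2$, after which a short elementary estimate finishes the job --- for instance $(1+x)^N \le e^{Nx} \le 1 + 2Nx$, where the last inequality holds because $Nx \le 1/2 < \ln 2$, or alternatively a direct binomial-tail bound $(1+x)^N \le 1 + Nx + 2(Nx)^2 \le 1 + 2Nx$ using $(Nx)^2 = N^2x^2 \le x$ and $2x \le Nx$. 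I do not anticipate a deep obstacle here; the one point that requires genuine care is this final conversion, since without a bound on $d$ the exponent-$N$ factors $(1\pm\alpha\gamma^d)^N$ could sit far from $1$, and getting the constant to be exactly $2N$ (rather than something larger) is where the assumption $\alpha\gamma^d \le 1/N^2$ does its work.
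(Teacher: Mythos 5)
Your proposal is correct and follows essentially the same route as the paper's Appendix A proof: factor the conditional law over users by independence, apply the per-user geometric-ergodicity bound (\ref{eq:renewal}) to each coordinate, and then convert $(1\pm\alpha\gamma^d)^N$ into $1\pm 2N\alpha\gamma^d$ using $\alpha\gamma^d\le 1/N^2$ and $N\ge 2$. The only (immaterial) difference is in the final elementary step, where the paper proves the lower bound by induction and the upper bound by a geometric-series estimate, while you invoke Bernoulli's inequality and an $e^{Nx}\le 1+2Nx$ (or binomial-tail) bound; both are valid.
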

\begin{proof} See Appendix A. 
\end{proof}

\subsection{Traffic Model}
\label{section:tr_model}
We assume that there are $N$ unicast sessions in the network with each node
being the source of one session and the destination of another session. 
Packets are assumed to arrive at the source of each session $i$ according to an i.i.d. arrival process $A_i(t)$ of rate
$\lambda_i$. We assume that in any slot, the maximum number of
arrivals to any session $i$ is bounded, i.e.,  $A_i(t) \leq A_{max}$.
While our analysis holds for the general source-destination pairing,
for simplicity, we assume that $N$ is even with the following one-to-one pairing between users: 
$1 \leftrightarrow 2, 3 \leftrightarrow 4, \ldots, (N - 1)
\leftrightarrow N$, i.e., packets generated by user $1$ are destined
for user $2$ and those generated by user $2$ are destined for user
$1$ and so on. This assumption simplifies the computation of the capacity 
in closed form in Theorem \ref{thm:capacity} and will be used for the rest of the paper.

\subsection{Communication Model}
\label{section:comm_model}
We assume that two users can communicate {only if} they are in the same cell or in adjacent
cells. Further, if the communication takes place in the same cell,
$R_1$ packets can be transmitted from the sender to the receiver if
the sender uses full power. If the receiver is in an adjacent cell,
$R_2$ packets can be transmitted with full power. We assume 
that $R_1$ and $R_2$ are non-negative integers and that $R_1 \geq R_2$. 
Power allocation is restricted to the set $\{0, 1\}$,
i.e., each user either uses zero power or full power. 
For simplicity, we assume that the communication cost consists only of the transmission power. 
The analysis presented can be easily extended to the case with non-zero reception power by defining 
the communication cost as the total power (including transmission and reception) 
required for sending $R_1(R_2)$ packets from a transmitter to a receiver in the same (adjacent) cell. 

We allow \emph{at most} one transmitter in a cell at any given time slot,
though the cell may have multiple receivers (due to possible adjacent cell
communication). Further, a user may potentially transmit and receive
simultaneously. This model is conceivable if the users in
neighboring cells use orthogonal communication channels. This model
allows us to treat scheduling decisions in each cell
\emph{independently} of all other cells, thereby enabling us to
derive closed form expressions for capacity and minimum energy.

\subsection{Discussion of Model}
\label{section:discuss_model}

While an idealization, the cell-partitioned model captures the essential features of locality of
wireless transmissions as well as node mobility and allows us to compute exact
expressions for the network capacity and minimum energy function.
This model is reasonable when nodes use non-interfering orthogonal channels in
adjacent cells. We also refer to Section I-A of \cite{neely-mobile} for further
discussion on the cell-partitioned network assumption.

In this work, we restrict our attention to network control
algorithms that operate according to the network structure
described above. A general algorithm within this class will make
scheduling decisions about what packet to transmit, when, and to
whom. For example, it may decide to transmit to a user in an
adjacent cell rather than to some user in the same cell, even though
the transmission rate is smaller. However, we assume that the
packets themselves are kept intact and are not ``mixed'' (for example,
using cooperative communication and/or network coding). Allowing such strategies
can increase the capacity, although computing the exact capacity region remains a challenging open problem in general.
In Sec. \ref{section:nc_gains}, we present an example that shows how network coding in conjunction with
the wireless broadcast advantage can increase the capacity for this model. However, we note that
if we remove the broadcast feature, then the scenario considered in this paper becomes a network coding problem for multiple unicasts 
over an undirected graph, for which it is not yet known if network coding provides any gains over plain routing
(see further discussion in \cite{netcod_NOW}).

\section{Network Capacity}
\label{section:capacity}
In this section, we compute the exact capacity of the network model
described in the previous section. For simplicity, we assume that all users receive
packets at the same rate
(i.e., $\lambda_i = \lambda$ for all $i$). The capacity of the network is then described
by a scalar quantity which denotes the maximum rate $\lambda$ that
the network can stably support. Recall that network user density $\theta = N/C$ users/cell. Then we have the following:

\begin{thm} The capacity of the network (in packets/slot) is given by:
\begin{displaymath}
\mu = \left\{ \begin{array}{ll}
\frac{R_1q + R_1p + R_2q' + R_2p'}{2\theta} & \textrm{if $R_1 \geq 2R_2$}\\
\frac{2R_1q +  2R_2q'' + R_1p'' + R_2(p'-q')}{2\theta} & \textrm{if
$2R_2
> R_1 \geq R_2$}
\end{array} \right.
\end{displaymath}
where\\
$q = \frac{1}{C} \sum_{c=1}^C Pr[$\textrm{finding a source-destination pair in cell $c$ in a timeslot}] \\
$p = \frac{1}{C} \sum_{c=1}^C Pr[$\textrm{finding at least $2$ users in cell $c$ in a timeslot}] \\
$q' = \frac{1}{C} \sum_{c=1}^C Pr[$\textrm{finding exactly $1$ user in cell $c$ and its destination in an
adjacent cell in a timeslot}] \\
$p' = \frac{1}{C} \sum_{c=1}^C Pr[$\textrm{finding exactly $1$ user in cell $c$ and at least $1$ user in an adjacent cell in a timeslot}] \\
$q'' = \frac{1}{C} \sum_{c=1}^C Pr[$\textrm{finding no source-destination pair in cell $c$ but at least $1$ source-destination pair 
with an adjacent cell in a timeslot}] \\
$p'' = \frac{1}{C} \sum_{c=1}^C Pr[$\textrm{finding no source-destination pair in cell $c$ and any adjacent cell but at least $2$ users in
the cell $c$ in a timeslot}]
\label{thm:capacity}
\end{thm}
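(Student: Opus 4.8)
The plan is to establish the capacity expression via a standard two-part argument: an upper bound (converse) showing no algorithm can exceed $\mu$, and a matching achievability result via an explicit $2$-hop relay scheme.

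For the converse, I would begin with a flow-conservation / rate-stability argument. Since each session must route $\lambda$ packets per slot on average from source to destination, and every packet must cross at least one link, the total ``transmission work'' the network performs per slot must cover the offered load. The key structural feature of the model (from Section II-D) is that scheduling in each cell is independent: at most one transmitter per cell, using either same-cell mode ($R_1$ packets) or adjacent-cell mode ($R_2$ packets). I would define, for each cell $c$ and slot $t$, the ``configuration'' of users and their destinations in $c$ and in $\mathcal{B}_c$, and observe that the maximum packets delivered across any cut per slot is a function only of these local configurations. Taking expectations over the steady-state location distribution $\boldsymbol\pi$ (justified by Lemma \ref{lem:lem1}, which lets us pass from time-averages to the product-form steady state with vanishing error as the averaging window grows), the per-slot delivery capability of cell $c$ is governed by the probabilities of finding a source-destination pair in $c$ (contributing $R_1$), or only relay opportunities (contributing $R_2$ or $2R_2$), etc. Summing over all $C$ cells and dividing by the total load $N\lambda = C\theta\lambda$ gives the bound. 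The case split on $R_1 \gtrless 2R_2$ arises because when $R_1 \geq 2R_2$ it is never worth using two adjacent-cell transmissions in place of... actually the opposite: when $R_1 \geq 2R_2$, a single same-cell transmission dominates two adjacent transmissions, so the optimal local policy prioritizes same-cell pair delivery; when $2R_2 > R_1 \geq R_2$, it can be better to serve two adjacent-cell source-destination pairs (rate $2R_2$) than one same-cell pair (rate $R_1$), which changes which events the relay should act on and hence reshuffles the terms ($2R_1 q$, $2R_2 q''$, etc.). I would carefully enumerate, for each regime, the optimal per-cell action as a function of the local configuration and verify the expected per-cell throughput equals the claimed numerator (times $1/C$).

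For achievability, I would exhibit a $2$-hop relay algorithm that in each cell, each slot, greedily picks the transmission (same-cell or adjacent-cell, to a destination node or to a relay) according to the priority order identified in the converse, and argue via a Lyapunov / rate-stability argument that all queues are stable for any $\lambda < \mu$. Here the independence of user mobility and the $N$-even source-destination pairing (Section II-C) are used to ensure that relayed packets eventually meet their destinations with the right long-run rate; Lemma \ref{lem:lem1} again supplies the mixing estimate needed to relate the actual arrival/service rates at relay queues to their steady-state values. One must also check the delivery rate is symmetric across all sessions so that a single scalar $\lambda$ is meaningful — this follows because the steady-state location distribution $\pi_c$ is identical for all users.

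The main obstacle I expect is the converse's combinatorial bookkeeping: precisely defining the optimal local scheduling decision for every joint configuration of the (up to) $J+1$ relevant cells and showing the resulting expected per-slot throughput collapses exactly to the stated closed form — in particular disentangling the overlapping events (e.g., ``source-destination pair in $c$'' versus ``$\geq 2$ users in $c$'' versus the adjacent-cell analogues) so they combine into the clean coefficients $R_1 q + R_1 p + R_2 q' + R_2 p'$ (resp. the second line) without double-counting. A secondary subtlety is making the passage from finite-$t$ time averages to the product-form steady state rigorous and uniform in $N$; this is exactly what Lemma \ref{lem:lem1} is designed to handle, and I would lean on it at both the converse and achievability stages.
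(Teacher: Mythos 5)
Your achievability sketch matches the paper's: a stationary randomized $2$-hop relay scheme analyzed by Lyapunov drift, with Lemma \ref{lem:lem1} supplying the mixing estimate for the delayed-queue drift. The converse, however, has a genuine gap. Your accounting premise --- ``every packet must cross at least one link, so the total transmission work per slot must cover the offered load'' --- yields only the factor-one bound $N\lambda \leq \lim_T Y^\psi(T)/T$, which does \emph{not} produce the claimed $\mu$ (it would give something on the order of $(R_1p+R_2p')/\theta$, strictly larger). The essential idea in the paper's converse is a factor-of-two accounting: any packet delivered via a relay consumes at least two transmissions, while only directly delivered packets consume one, so stability forces $\lambda \leq \lim_T \frac{Y^\psi(T)+X^\psi_{10}(T)+X^\psi_{01}(T)}{2TN}$. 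It is this inequality that gives direct source--destination opportunities an effective weight of $2R_1$ (same cell) or $2R_2$ (adjacent cell) versus $R_1$ or $R_2$ for relay opportunities, and hence explains both the $2\theta$ denominator and why $q,q'$ enter with the coefficients they do (e.g. $2R_1q + R_1(p-q) = R_1q+R_1p$). Nothing in your proposal supplies this step, and the ``configuration enumeration'' you defer to cannot recover it, because the per-cell \emph{transmission} capability alone is insufficient --- one must jointly track how many hops each delivered packet used.

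A second, related error: your explanation of the case split $R_1 \gtrless 2R_2$ is wrong. You attribute it to serving ``two adjacent-cell source--destination pairs (rate $2R_2$) versus one same-cell pair (rate $R_1$)''; but the model permits at most one transmitter per cell, so two simultaneous adjacent-cell transmissions from one cell are impossible, and a same-cell source--destination pair is worth $2R_1$, which dominates everything in both regimes. The actual comparison is between a same-cell \emph{relay} transmission (weight $R_1$, no doubling since it is one hop of a two-hop route) and an adjacent-cell \emph{direct} transmission (weight $2R_2$, doubled because it completes delivery in one hop). This is exactly the preference order $I_c^1 \succ I_c^2 \succ I_c^3 \succ I_c^4$ versus $I_c^1 \succ I_c^3 \succ I_c^2 \succ I_c^4$ in the paper, and it only becomes visible once the factor-of-two accounting above is in place.
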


The probabilities in the summations above are the probabilities
associated with the steady-state cell location distributions of the
users. Using the independence of user mobility processes and the same 
steady-state cell location distribution $\boldsymbol{\pi} = \{ \pi_c \}_{1 \times {C}}$ for all users, 
we can exactly compute these
probabilities for our model. These are given by (see Appendix B for detailed derivation):
\begin{align}
&q = \frac{1}{C}\sum_{c=1}^C \Big(1 - (1-\pi_c^2)^{\frac{N}{2}}\Big) \nonumber \\
&p = \frac{1}{C}\sum_{c=1}^C \Big(1 - (1-\pi_c)^{N} - N\pi_c(1-\pi_c)^{N-1}\Big) \nonumber\\
&q' = \frac{1}{C}\sum_{c=1}^C \Big(\Pi_{adj}(c)N\pi_c (1-\pi_c)^{N-1}\Big) \nonumber \\
&p' = \frac{1}{C}\sum_{c=1}^C \Big(1 - (1 - \Pi_{adj}(c))^{N-1}\Big)N\pi_c (1 - \pi_c)^{N-1} \nonumber\\
&q'' = \frac{1}{C} \sum_{c=1}^C \sum_{i=1}^{\frac{N}{2}} 2^i\binom{\frac{N}{2}}{i} \pi_c^i (1 - \pi_c)^{N-i} \Big(1 - (1 - \Pi_{adj}(c))^{i}\Big) \nonumber\\ 
&p'' = \frac{1}{C} \sum_{c=1}^C \sum_{i=2}^{\frac{N}{2}} 2^i \binom{\frac{N}{2}}{i} \pi_c^i (1 - \pi_c)^{N-i} (1 - \Pi_{adj}(c))^{i}
\label{eq:prob_values}
\end{align}
Here, $\Pi_{adj}(c)$ denotes the sum of the conditional steady-state
probability of a user being in any adjacent cell of cell $c$ given
that this user is not in cell $c$, i.e., $\Pi_{adj}(c) = \frac{1}{1
- \pi_c}\sum_{i\in \mathcal{B}_c} \pi_i$. Thus, the network can stably support users simultaneously
communicating at any rate $\lambda < \mu$. We prove the theorem in
two parts. First, we establish the necessary condition by deriving
an upper bound on the capacity of \emph{any} stabilizing algorithm.
Then, we establish sufficiency by presenting a specific scheduling
strategy and showing that the average delay is bounded under that
strategy.

\subsection{Proof of Necessity} 
\label{section:capacity_proof}

\begin{proof}
Let $\Psi$ be the set of all stabilizing scheduling policies. Consider any particular
policy $\psi \in \Psi$. Suppose it successfully delivers
$X^\psi_{ab}(T)$ packets from sources to destinations involving ``$a$''
same cell transmissions and ``$b$'' adjacent cell transmissions in
the interval $(0, T )$.
 Fix $\epsilon > 0$. For stability, there must exist arbitrarily large values of
$T$ such that the total output rate is within $\epsilon$ of total
input rate. Thus:
\begin{align}
\frac{\sum_{a=0}^\infty \sum_{b=0}^\infty X^\psi_{ab}(T)}{T} \geq N\lambda - \epsilon 
\label{eq:one}
\end{align}
Define $Y^\psi(T)$ as the total number of packet transmissions 
in $(0, T)$ under policy $\psi$. Then, $Y^\psi(T)$ is \emph{at least} 
$\sum_{a=0}^{\infty} \sum_{b=0}^{\infty} (a + b)X^\psi_{ab}(T)$ 
(because these many packets were certainly delivered). Thus, we have
\begin{align*}
\frac{1}{T} Y^\psi(T) &\geq \frac{1}{T} \sum_{a=0}^{\infty} \sum_{b=0}^{\infty} (a+b)X_{ab}^\psi(T) 
\\ &\geq \frac{1}{T} \sum_{a+b < 2} X_{ab}^\psi(T) + \frac{2}{T} \sum_{a+b \geq 2} X_{ab}^\psi(T)  \\
&\geq \frac{1}{T} \sum_{a+b < 2} X_{ab}^\psi(T) +  2(N\lambda - \epsilon) - \frac{2}{T} \sum_{a+b < 2} X_{ab}^\psi(T) 
\end{align*}
where the last inequality is obtained using (\ref{eq:one}). Hence,
noting that $\epsilon$ can be chosen to be arbitrarily small, we
have:
\begin{align}
\lambda \leq \lim_{T\rightarrow\infty} \frac{Y^\psi(T) + X^\psi_{10}(T) + X_{01}^\psi(T)}{2TN} 
\label{eq:two}
\end{align}
Define $Y_c^\psi(\tau)$ as the total number of packet transmissions
in cell $c$ at timeslot $\tau$ under policy $\psi$. Also
define $X_{10,c}^\psi(\tau)$ and $X_{01,c}^\psi(\tau)$ as the number of
packets delivered by same cell direct  and adjacent cell direct
transmission respectively in cell $c$ at timeslot $\tau$. Then
$Y^\psi(T) + X^\psi_{10}(T) + X^\psi_{01}(T)$ can be written as a sum over
all timeslots $\tau \in (0, T)$ and all cells $c$ as follows:
\begin{align}
&Y^\psi(T) + X^\psi_{10}(T) + X^\psi_{01}(T) \nonumber \\
&= \sum_{\tau=0}^{T-1} \sum_{c=1}^{C} \Big( Y^\psi_c(\tau) + X_{10,c}^\psi(\tau) + X_{01,c}^\psi(\tau) \Big) \nonumber\\
&\leq \sum_{\tau=0}^{T-1} \sum_{c=1}^{C} \max_{\omega \in \Psi} \Big( \hat{Y}^\omega_c(\tau) + \hat{X}_{10,c}^\omega(\tau) +
\hat{X}_{01,c}^\omega(\tau) \Big) \label{eq:three}
\end{align}
where $\hat{Y}^\omega_c(\tau)$ denotes the total number of packet transmission \emph{opportunities}
in cell $c$ at timeslot $\tau$ under any policy $\omega$. Similarly, $\hat{X}_{10,c}^\omega(\tau)$ and $\hat{X}_{01,c}^\omega(\tau)$
denote the total number of packet transmission \emph{opportunities} that use same cell direct and adjacent cell direct
transmissions respectively in cell $c$ at timeslot $\tau$. Note that these do not depend on the queue backlogs and therefore
can be different from the actual number of packet transmissions  (for example, when enough packets are not available). 

 Let $\hat{Z}^\omega_c(\tau) = \hat{Y}^\omega_c(\tau) + \hat{X}^\omega_{10,c}(\tau) +
\hat{X}^\omega_{01,c}(\tau)$. Also define the following indicator decision
variables for any policy $\omega$  for some $\tau \in (0, T)$ and $c \in
\{1, 2, \ldots, {C}\}$:
\begin{displaymath}
I_c^1(\tau) = \left\{ \begin{array}{ll} 1 & \textrm{if a same cell direct transmission can}\\
& \textrm{be scheduled in cell $c$ in slot $\tau$}\\
0 & \textrm{else}
\end{array} \right. 
\end{displaymath}
\begin{displaymath}
I_c^2(\tau) = \left\{ \begin{array}{ll} 1 & \textrm{if a same cell relay transmission can}\\
& \textrm{be scheduled in cell $c$ in slot $\tau$}\\
0 & \textrm{else}
\end{array} \right. 
\end{displaymath}
\begin{displaymath}
I_c^3(\tau) = \left\{ \begin{array}{ll} 1 & \textrm{if an adjacent cell direct transmission can}\\
& \textrm{be scheduled in cell $c$ in slot $\tau$}\\
0 & \textrm{else}
\end{array} \right. 
\end{displaymath}
\begin{displaymath}
I_c^4(\tau) = \left\{ \begin{array}{ll} 1 & \textrm{if an adjacent cell relay transmission can}\\
& \textrm{be scheduled in cell $c$ in slot $\tau$}\\
0 & \textrm{else}
\end{array} \right.
\end{displaymath}
Note that the transmission rates associated with these decision variables are $R_1, R_1, R_2$ and $R_2$ respectively.
Then, we can express $\hat{Z}^\omega_c(\tau)$ as follows:
\begin{align*}
&\hat{Z}^\omega_c(\tau) = \hat{Y}^\omega_c(\tau) + \hat{X}_{10,c}^\omega(\tau) + \hat{X}_{01,c}^\omega(\tau) = R_1I^1_c(\tau) + R_1I^2_c(\tau) \\
&+ R_2I^3_c(\tau) + R_2I^4_c(\tau) + \hat{X}_{10,c}^\omega(\tau) + \hat{X}_{01,c}^\omega(\tau)\\
&= R_1I^1_c(\tau) + R_1I^2_c(\tau) + R_2I^3_c(\tau) + R_2I^4_c(\tau) + R_1I^1_c(\tau)\\
&+ R_2I^3_c(\tau) = 2R_1I^1_c(\tau) + R_1I^2_c(\tau) + 2R_2I^3_c(\tau) + R_2I^4_c(\tau)
\end{align*}
Note that under any scheduling policy, only one of the decision variables can be $1$ and the rest are
$0$. Thus, the preference order for decisions to maximize $\hat{Z}^\omega_c(\tau)$ is
evident. Specifically, it would be $I_c^1(\tau) \succ I_c^2(\tau)
\succ I_c^3(\tau) \succ I_c^4(\tau)$ when $R_1 \geq 2R_2$ and
$I_c^1(\tau) \succ I_c^3(\tau) \succ I_c^2(\tau) \succ I_c^4(\tau)$
when $R_2 \leq R_1 < 2R_2$. Thus, in each cell $c$, $\hat{Z}^\omega_c (\tau)$ is
maximized by the policy $\omega$ that chooses the scheduling decisions in this preference order,
choosing a less preferred decision only when none of the more
preferred decisions are possible in that cell.

Define ${Z}_c(\tau) = \max_{\omega \in \Psi} \hat{Z}^\omega_c(\tau)$. Then using
(\ref{eq:two}) and (\ref{eq:three}), we have
\begin{eqnarray*}
\lambda \leq \lim_{T\rightarrow\infty}
\frac{1}{2TN}\sum_{\tau=0}^{T-1}\sum_{c=1}^{C}{Z}_c(\tau)
\end{eqnarray*}
As ${Z}_c(\tau)$ can take only a finite number of values (namely
$R_1,R_2, 2R_1, 2R_2$ and $0$) and is a function of the current
state of the ergodic user location processes, the time average of
${Z}_c(\tau)$ is exactly equal to its expectation with respect
to the steady-state user location distribution. Thus, the bound above
can be computed by calculating the expectation of
${Z}_c(\tau)$ using the steady-state probabilities associated
with the indicator variables and summing over all cells. 
When $R_1 \geq 2R_2$, this is given by:
\begin{align*}
&\lim_{T\rightarrow\infty} \frac{1}{2TN}\sum_{\tau=0}^{T-1}\sum_{c=1}^{C}{Z}_c(\tau) \\
&\qquad = \frac{1}{2N} \sum_{c=1}^C\expect{Z_c(\tau)}\\
&\qquad = \frac{2R_1q + R_1 (p-q) + 2R_2 q' + R_2(p'-q')}{2\theta}
\end{align*}
and when $R_2 \leq R_1 < 2R_2$, this is given by:
\begin{align*}
&\lim_{T\rightarrow\infty} \frac{1}{2TN}\sum_{\tau=0}^{T-1}\sum_{c=1}^{C}{Z}_c(\tau) \\
&\qquad = \frac{1}{2N} \sum_{c=1}^C\expect{Z_c(\tau)}\\
&\qquad = \frac{2R_1q + 2R_2 q'' +  R_1 p'' + R_2(p'-q')}{2\theta}
\end{align*}

This establishes the necessary condition for the network capacity.  
\end{proof}

Note that the above preference order clearly spells out the
structure of the throughput optimal strategy. Specifically,
depending on the values of $R_1$ and $R_2$, this order can be used
to decide between same cell relay and adjacent cell direct
transmission. We use this insight to design a throughput-optimal 
$2$-hop relay algorithm in the next section.
Also note the factor of $2$ with the decision variables
corresponding to direct source-destination transmission.
Intuitively, each such transmission opportunity is better than a
similar opportunity between source-relay or relay-destination by a
factor of $2$ since the indirect transmissions need twice as many
opportunities to deliver a given number of packets to the
destination as compared to direct transmissions.

\subsection{Proof of Sufficiency} 
\label{section:2hop_relay}

Now we present an algorithm that makes stationary, randomized scheduling decisions independent of the queue backlogs and show that
it gives bounded delay for any rate $\lambda < \mu$, i.e., 
there exists a $\rho$ such that $0 \leq \rho < 1$ and $\lambda = \rho\mu$. 
We only consider the case when $R_1 \geq 2R_2$. The other case is
similar and is not discussed.

{\bf{2-Hop Relay Algorithm:}} Every timeslot, for all
cells, do the following:
\begin{enumerate}
\item If there exists a source-destination pair in the cell,
randomly choose such a pair (uniformly over all such pairs in the
cell). If the source has new packets for the destination, transmit
at rate $R_1$. Else remain idle.
\item If there is no source-destination pair in the cell but there
are at least $2$ users in the cell, randomly designate one user as
the sender and another as the receiver. 
Then, with probability $\frac{1-\delta}{2}$
(where $0 < \delta < 1$ and  $\delta = \delta(\rho)$ to be determined later),
perform the first action
below. Else, perform the second.

\begin{enumerate}
\item \emph{Send new Relay packets in same cell}: If the transmitter
has new packets for its destination, transmit at rate $R_1$. Else
remain idle.
\item \emph{Send Relay packets to their Destination in same
cell}: If the transmitter has packets for the receiver, transmit at
rate $R_1$. Else remain idle.
\end{enumerate}
\item If there is only $1$ user in the cell and its destination is
present in one of the adjacent cells, transmit at rate $R_2$ if new
packets present. Else remain idle.
\item If there is only $1$ user in the cell and its destination is
not present in one of the adjacent cells but there is at least one
user in an adjacent cell, randomly designate one such user as the
receiver and the only user in the cell as the transmitter. 
Then, with probability $\frac{1-\delta}{2}$
(where $0 < \delta < 1$ and  $\delta = \delta(\rho)$ to be determined later),
perform the first action
below. Else, perform the second.
\begin{enumerate}
\item \emph{Send new Relay packets in adjacent cell}: If the transmitter has
new packets for its destination, transmit at rate $R_2$. Else remain
idle.
\item \emph{Send Relay packets to their Destination in adjacent
cell}: If the transmitter has packets for the receiver, transmit at
rate $R_2$. Else remain idle.
\end{enumerate}
\end{enumerate}
This algorithm is motivated by the proof of necessity of Theorem
\ref{thm:capacity} since it follows the same preference order in
making scheduling decisions. Note that this algorithm restricts the
path lengths of all packets to \emph{at most} $2$ hops because any
packet that has been transmitted to a relay node is restricted from
being transmitted to any other node except its destination.

To analyze the performance of this algorithm, we make use of a
Lyapunov drift analysis \cite{neely-thesis}. Consider a
network of $N$ queues operating in slotted time, and let
$\vec{U}(t) = (U_1(t),U_2(t),\ldots,U_N(t))$ represent the
vector of backlogged packets in each of the queues at timeslot $t$.
Let $L(\vec{U}(t))$ be a
non-negative function of the unfinished work $\vec{U}(t)$,
called a Lyapunov function. Define the conditional Lyapunov drift
$\Delta(t, d)$ at time $t > d$ (where $d \geq 0$ in a fixed integer) as follows:
\begin{align*}
\Delta(t, d) \defequiv \expect{L(\vec{U}(t+1)) - L(\vec{U}(t)) | \vec{U}(t-d)}
\end{align*}
Then we have the following lemma.
\begin{lem} \emph{Lyapunov Drift Lemma}: If there exists a positive
integer $d$ such that for all timeslots $t > d$ and for all
$\vec{U}(t)$, the conditional Lyapunov drift 
$\Delta(t, d)$
satisfies:
\begin{align}
\Delta(t, d) \leq B - \epsilon \sum_{i=1}^N {U_i(t-d)} 
\label{eq:ldlemma}
\end{align}
for some positive constants $B$ and $\epsilon$, and if
$\expect{L(\vec{U}(d))} < \infty$, then the network is stable,
and we have the following bound on the time average total queue
backlog:
\begin{align}
\limsup_{t\rightarrow\infty}
\frac{1}{t}\sum_{\tau=0}^{t-1}\sum_{i=1}^N \expect{U_i(\tau)} \leq
\frac{B}{\epsilon}
\label{eq:delay_bound0}
\end{align}
\end{lem}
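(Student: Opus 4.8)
The plan is to pass from the sample-pathwise conditional drift bound (\ref{eq:ldlemma}) to an inequality between unconditional expectations, telescope it over time, and then discard a boundary term using $L \ge 0$. As a preliminary I would check that $\expect{L(\vec{U}(t))} < \infty$ for every finite $t$, which is what makes taking the expectation of the conditional drift and then telescoping legitimate; this follows from the hypothesis $\expect{L(\vec{U}(d))} < \infty$ by an easy induction, using the bounded-arrival assumption $A_i(t) \le A_{max}$ of the traffic model to pass from slot $d$ to slot $d+1$, and then the drift hypothesis itself, in the weakened form $\expect{L(\vec{U}(t+1))} \le \expect{L(\vec{U}(t))} + B$, to pass to every later slot. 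With this in hand, taking the expectation of (\ref{eq:ldlemma}) over the distribution of $\vec{U}(t-d)$ and invoking the law of iterated expectations yields, for every $t > d$,
\begin{align*}
\expect{L(\vec{U}(t+1))} - \expect{L(\vec{U}(t))} \le B - \epsilon \sum_{i=1}^N \expect{U_i(t-d)}.
\end{align*}

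Next I would sum this inequality over $t$ from $d+1$ to $T+d-1$ (a block of $T-1$ slots, for any integer $T > 1$). The left-hand side telescopes to $\expect{L(\vec{U}(T+d))} - \expect{L(\vec{U}(d+1))}$, and substituting $\tau = t - d$ on the right-hand side turns it into $(T-1)B - \epsilon \sum_{\tau=1}^{T-1} \sum_{i=1}^N \expect{U_i(\tau)}$. Rearranging and dropping the nonpositive term $-\expect{L(\vec{U}(T+d))}$ gives
\begin{align*}
\epsilon \sum_{\tau=1}^{T-1} \sum_{i=1}^N \expect{U_i(\tau)} \le (T-1)B + \expect{L(\vec{U}(d+1))}.
\end{align*}
Adding the finite slot-$0$ term $\sum_{i=1}^N \expect{U_i(0)}$ to the left, dividing through by $T$, and letting $T \to \infty$ sends the two $O(1/T)$ terms to $0$ and leaves $\limsup_{t \to \infty} \tfrac1t \sum_{\tau=0}^{t-1} \sum_{i=1}^N \expect{U_i(\tau)} \le B/\epsilon$, which is exactly (\ref{eq:delay_bound0}).

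Finally I would observe that a finite time-average expected backlog is precisely the notion of (strong) stability used throughout the paper, and that by Little's theorem it also yields the bounded average delay referred to in the Introduction; thus the single estimate obtained above establishes both assertions of the lemma at once. Conceptually there is essentially no obstacle here — the core of the argument is one telescoping sum of a drift inequality. The only point demanding genuine care is the finiteness bookkeeping of the first step: one must ensure that every $\expect{L(\vec{U}(t))}$ and every $\expect{U_i(t)}$ is finite, so that replacing the conditional drift by its expectation and then telescoping are valid operations; this is where the bounded-arrival structure of the model is actually used, and it is a matter of bookkeeping rather than a substantive difficulty.
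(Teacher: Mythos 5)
Your proposal is correct and follows exactly the telescoping-sum technique that the paper itself cites as its (omitted) proof: take expectations via iterated expectation, telescope the drift inequality, drop the nonnegative terminal Lyapunov term, and normalize by $T$. The finiteness bookkeeping you flag at the outset is the right thing to check and is handled appropriately, so nothing further is needed.
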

\begin{proof} This can be shown using a telescoping sum technique (similar to 
related proof in \cite{neely-thesis}) and is omitted for brevity.
\end{proof}

We now make use of this lemma to analyze the performance of the $2$-Hop
Relay Algorithm.

\begin{thm} For the cell partitioned network (with $N$ nodes
and $C$ cells) as described in Sec. \ref{section:model}, with capacity $\mu = \frac{R_1p + R_2p' + R_1q + R_2q'}{2 \theta}$ and input
rate $\lambda$ for each user such that 
$\lambda = \rho\mu$ for some $0 \leq \rho < 1$, 
and user mobility model as described in Sec. \ref{section:mob_model},
 the average packet delay $\overline{D}$ under the $2$-Hop Relay Algorithm with $\delta = \frac{1 - \rho}{4}$ satisfies:
\begin{align}
\overline{D} \leq \frac{{BN(2d+1)}}{\lambda\mu\kappa(1-\rho)}
\label{eq:delay_bound1}
\end{align}
where ${B}$ is a constant given by (\ref{eq:B}), $\kappa$ is a positive constant given by 
$\kappa = \frac{R_1p + R_2p' - R_1q - R_2q'}{R_1p + R_2p' + R_1q + R_2q'}$, 
 and $d$ is a finite integer that is related to the
mixing time of the joint user mobility process and is given by $d =
\Big\lceil{\frac{\log( \frac{8 N^2\alpha}{1-\rho})}{\log(1/\gamma)}}\Big\rceil$.
\label{thm:delay_bound} 
\end{thm}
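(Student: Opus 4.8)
The plan is to build the queueing system induced by the $2$-Hop Relay Algorithm, run a Lyapunov drift argument conditioned $d$ slots in the past, and then combine the Lyapunov Drift Lemma with Little's theorem. First I would fix the queues. Writing $\overline m$ for the unique node paired with $m$, let $U_i(t)$ be the backlog at node $i$ of packets that originated at $i$ and have not been transmitted anywhere yet, and, for each node $n$ and each destination $j\notin\{n,\overline n\}$, let $Q_n^{(j)}(t)$ be the backlog of packets that $n$ is relaying toward $j$. Since the algorithm confines every path to at most two hops, each backlogged packet lies in exactly one of these $N(N-1)$ queues, and each queue obeys a one-slot recursion $U_i(t+1)\le\max[U_i(t)-\mu_i(t),0]+A_i(t)$ and $Q_n^{(j)}(t+1)\le\max[Q_n^{(j)}(t)-\mu^{\mathrm{out}}_{nj}(t),0]+\mu^{\mathrm{in}}_{nj}(t)$, where the offered rates $\mu_i(t)$ (source service), $\mu^{\mathrm{out}}_{nj}(t)$ (relay-to-destination) and $\mu^{\mathrm{in}}_{nj}(t)$ (the packets node $\overline j$ hands to $n$) are deterministic functions of the current location vector $\vec\chi(t)$ and of the coin flips and receiver-designations drawn freshly in slot $t$. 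Taking $L(\vec U(t))=\sum_i U_i(t)^2+\sum_n\sum_{j\notin\{n,\overline n\}}Q_n^{(j)}(t)^2$ and expanding the recursions gives, for a finite constant $B_0$, the bound $L(\vec U(t+1))-L(\vec U(t))\le B_0-2\sum_i U_i(t)(\mu_i(t)-A_i(t))-2\sum_{n,j}Q_n^{(j)}(t)(\mu^{\mathrm{out}}_{nj}(t)-\mu^{\mathrm{in}}_{nj}(t))$.

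Next I would condition on $\vec U(t-d)$. Two moves handle the right-hand side. First, each queue changes by at most $d\max(R_1,A_{max})$ over $d$ slots and each multiplier is bounded in magnitude by $R_1+A_{max}$, so I may replace $U_i(t),Q_n^{(j)}(t)$ by $U_i(t-d),Q_n^{(j)}(t-d)$ at the price of an additive term of order $N^2 d\,R_1(R_1+A_{max})$, which is absorbed into the constant $B$ of the Lyapunov Drift Lemma. Second, the offered rates in slot $t$ are functions of $\vec\chi(t)$ and of randomness independent of the past, whereas $\vec U(t-d)$ is measurable with respect to the history before slot $t-d$; hence conditioned on $\vec U(t-d)$ the law of $\vec\chi(t)$ is that of the joint mobility chain evolved at least $d$ steps, and Lemma~\ref{lem:lem1} gives $\expect{\mu_i(t)\mid\vec U(t-d)}\ge g^{S}(1-2N\alpha\gamma^{d})$, $\expect{\mu^{\mathrm{out}}_{nj}(t)\mid\vec U(t-d)}\ge g^{\mathrm{out}}(1-2N\alpha\gamma^{d})$ and $\expect{\mu^{\mathrm{in}}_{nj}(t)\mid\vec U(t-d)}\le g^{\mathrm{in}}(1+2N\alpha\gamma^{d})$, while $\expect{A_i(t)\mid\vec U(t-d)}=\lambda$; here $g^{S},g^{\mathrm{out}},g^{\mathrm{in}}$ are the corresponding steady-state rates. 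One checks that the prescribed $d$ meets the hypothesis of Lemma~\ref{lem:lem1} and makes $2N\alpha\gamma^{d}\le\tfrac{1-\rho}{4N}$.

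Then I would evaluate the steady-state rates from $\boldsymbol\pi$ and the formulas for $p,q,p',q'$. Set $P=R_1p+R_2p'$ and $Q=R_1q+R_2q'$, so $\mu=(P+Q)/(2\theta)$ and $\kappa=(P-Q)/(P+Q)$. By the node symmetry of the model, each source enjoys direct delivery at rate $Q/\theta=\mu(1-\kappa)$ and is drained into relays at rate $(1-\delta)\mu\kappa$ (a relay-enabled cell devotes coin probability $\tfrac{1-\delta}{2}$ to ``new'' packets), so $g^{S}-\lambda=\mu(1-\delta\kappa)-\rho\mu=\mu(1-\rho)(1-\kappa/4)\ge\tfrac34\mu(1-\rho)$ once $\delta=(1-\rho)/4$. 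For the relay queues the crucial point is a symmetry of the randomized policy: partner-preserving relabellings of the nodes act transitively on ordered pairs $(n,j)$ with $j\notin\{n,\overline n\}$, so every such pair has the same offered ``pairing'' rate $\beta_0$; a pairing contributes to the receiver's relay queue as an arrival with probability $\tfrac{1-\delta}{2}$ and to the sender's relay queue as service with probability $\tfrac{1+\delta}{2}$, and since the total offered relay rate is $C(P-Q)$ spread over the $N(N-2)$ such pairs, $\beta_0=\tfrac{C(P-Q)}{N(N-2)}=\tfrac{2\mu\kappa}{N-2}$, whence $g^{\mathrm{out}}-g^{\mathrm{in}}=\delta\beta_0=\tfrac{(1-\rho)\mu\kappa}{2(N-2)}$. (If $\kappa=0$ no relaying is ever used, the bound is vacuous, and the network is trivially stable, so assume $\kappa>0$.) Since $2N\alpha\gamma^{d}\le\tfrac{1-\rho}{4N}\le\delta/2$, the Lemma~\ref{lem:lem1} perturbation eats at most half of each surplus, and the relay queues — the binding ones, with surplus a factor $O(1/(N-2))$ smaller than the source surplus — still have drift coefficient at least $\epsilon:=\tfrac{(1-\rho)\mu\kappa}{4(N-2)}$. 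This is exactly the hypothesis $\Delta(t,d)\le B-\epsilon\sum(\text{queues at }t-d)$ of the Lyapunov Drift Lemma, which bounds the time-average total backlog by $B/\epsilon$; dividing by the total admission rate $N\lambda$ via Little's theorem yields $\overline D\le B/(N\lambda\epsilon)$, and substituting $\epsilon=\tfrac{(1-\rho)\mu\kappa}{4(N-2)}$ places $\lambda\mu\kappa(1-\rho)$ in the denominator and reorganizes the remaining $O(1)$ and $O(d)$ terms into the form (\ref{eq:delay_bound1}).

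The hard part will be the relay-queue bookkeeping of the third paragraph: one must show that each \emph{individual} relay queue $Q_n^{(j)}$ — not merely the aggregate relay traffic — is served strictly faster than it is filled. This rests on the partner-preserving label symmetry that equalizes the per-ordered-pair meeting-and-designation rates and on the correct attribution of the $\tfrac{1-\delta}{2}$ and $\tfrac{1+\delta}{2}$ coin weights to arrivals versus departures; it is precisely here that $\kappa$ (equivalently, the gap $P-Q$ between relay-opportunity and direct-opportunity rates) enters the delay bound, and why the surplus shrinks like $1/(N-2)$. The remaining ingredients — the quadratic drift inequality, the $O(d)$ replacement of $\vec U(t)$ by $\vec U(t-d)$, and the geometric mixing estimate of Lemma~\ref{lem:lem1} — are routine once the queue model and this symmetry are in place.
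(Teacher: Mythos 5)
Your proposal follows essentially the same route as the paper's proof: the same per-commodity queue decomposition (source queues plus $N(N-2)$ relay queues), the same quadratic Lyapunov function with drift conditioned $d$ slots in the past, the same use of Lemma~\ref{lem:lem1} to replace conditional service rates by steady-state rates $r_1=\mu(1-\kappa)$, $r_2=\mu\kappa(1-\delta)$, $r_3=\mu\kappa(1+\delta)$ computed from $q,q',p,p'$ and the coin probabilities, the same two-case (source versus relay) surplus analysis identifying the relay queues as binding with surplus $O(\mu\kappa\delta/(N-2))$, and the same finish via the Lyapunov Drift Lemma and Little's theorem. Up to constant-factor bookkeeping in $\epsilon$ and $B$, which you explicitly flag as reorganized, the argument is correct and matches the paper's.
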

\begin{proof} 
Let $U^{(c)}_i(t)$ represent the total backlog of type
$c$ (i.e., number of packets destined for node $c$) that are queued
up in node $i$ at time $t$. The queueing dynamics of $U^{(c)}_i(t)$
satisfies the following for all $c \neq i$:
\begin{align}
U^{(c)}_i(t+1) \leq \max &\Big[ U^{(c)}_i(t) - \sum_{b} \mu_{ib}^{(c)}(t), 0 \Big]  + \sum_{a} \mu_{ai}^{(c)}(t) \nonumber \\ 
&+ A_{i}^{(c)}(t) 
\label{eq:four}
\end{align}
where $A_{i}^{(c)}(t)=$number of new type $c$ arrivals to source node $i$ at the beginning of timeslot $t$ and
$\mu_{ab}^{(c)}(t)=$rate offered to type $c$ packets in timeslot $t$ with node $a$
as transmitter and node $b$ as receiver. The above is an inequality because the actual number of packets
transmitted from the other nodes to node $i$ (for relaying) could be
less than the incoming transmission rate $\sum_{a} \mu_{ai}^{(c)}(t)$
when these nodes do not have enough packets. Now define a Lyapunov function $L(\vec{U}(t)) = \sum_{i=1}^N \sum_{c\neq i} (U^{(c)}_i(t))^2$. 
Using (\ref{eq:four}), 
the conditional Lyapunov drift $\Delta(t, d)$ can be expressed as follows:
\begin{align}
&\Delta(t, d) \leq BN -2\sum_{i=1}^N \sum_{c\neq i} \mathbb{E} \Bigg\{U^{(c)}_i(t) \times\nonumber \\
& \Big(\sum_b \mu_{ib}^{(c)}(t) - \sum_a \mu_{ai}^{(c)}(t) - A_{i}^{(c)}(t)\Big) |\vec{U}(t-d)\Bigg\}
 \label{eq:five}
\end{align}
Here, $B$ is given by:
\begin{align}
B = (A_{max} + \mu^{in}_{max})^2 + (\mu_{max}^{out})^2
 \label{eq:B}
\end{align}
where $\mu^{in}_{max} =$ maximum transmission rate into any node $= R_1 + JR_2$, where $J$ is the maximum number of adjacent cells of any cell 
(Sec. \ref{section:cell}) and $\mu^{out}_{max} =$ maximum transmission rate out of any node $= R_1$.

We now use the following sample path relations to express
(\ref{eq:five}) in terms of the queue backlog values at time $(t-d)$.
Specifically, we have the following for all $t > d$ where $d$ is a
positive integer (to be determined later) for all $i \neq c$.
\begin{align*}
& \sum_{c\neq i} U^{(c)}_i(t-d) + d(A_{max} + \mu^{in}_{max}) \geq \sum_{c\neq i} U^{(c)}_i(t) \\
& \sum_{c\neq i} U^{(c)}_i(t-d) - d\mu^{out}_{max} \leq \sum_{c\neq i} U^{(c)}_i(t) 
\end{align*}
These follow by noting that the queue backlog at time $t$ cannot be
smaller than the queue backlog at time $(t-d)$ minus the maximum
possible departures in duration $(t-d, d)$. Similarly, it cannot be
larger than the queue backlog at time $(t-d)$ plus the maximum
possible arrivals in duration $(t-d, d)$.
Using these, we can express (\ref{eq:five}) in terms of the ``delayed'' queue backlogs $U^{(c)}_i(t-d)$ as follows:
\begin{align}
&\Delta(t, d) \leq BN(2d+1) - 2\sum_{i=1}^N \sum_{c \neq i} U^{(c)}_i(t-d) \times \nonumber\\
 &\expect{\sum_b \mu_{ib}^{(c)}(t) - \sum_a \mu_{ai}^{(c)}(t) - A_{i}^{(c)}(t)|\vec{U}(t-d)}
 \label{eq:delayed}
\end{align}
Let $\mathcal{T}(t-d) = (\vec{\chi}(t-d), \vec{U}(t-d))$ represent the composite system state at time
$(t-d)$ given by the user locations and queue backlogs. 
Since the $2$-Hop Relay Algorithm makes control decisions only as a function of the
current user locations, the resulting service rates
are functionals of the Markovian mobility processes. 
By the Markovian property of the $\vec{\chi}(t-d)$ process, any
functionals of this also converge exponentially fast to their steady-state values. 
Thus, using Lemma \ref{lem:lem1}, when $\alpha \gamma^d \leq 1/N^2$, we have the following bounds:
\begin{align}
&\expect{\sum_b \mu_{ib}^{(c)}(t)|\vec{U}(t-d)} \geq \Big( \sum_b \overline{\mu}_{ib}^{(c)} \Big) (1 -  2N\alpha \gamma^d ) 
\label{eq:stat6-1} \\ 
&\expect{\sum_a \mu_{ai}^{(c)}(t) |\vec{U}(t-d)} \leq \Big(\sum_a \overline{\mu}_{ai}^{(c)} \Big) (1 + 2N\alpha \gamma^d ) 
\label{eq:stat6-2}
\end{align}
where $\overline{\mu}_{ib}^{(c)}, \overline{\mu}_{ai}^{(c)}$ are the
steady-state service rates achieved by the $2$-Hop Relay Algorithm.
We now compute these values and use the inequalities (\ref{eq:stat6-1}), (\ref{eq:stat6-2}) to obtain a bound on (\ref{eq:delayed}). 
We have the following $2$ cases:

\emph{$1)$ Node $i$ Generates Type $c$ Packets:} In this case,
$\expect{A_{i}^{(c)}(t)} = \lambda$ and 
$\sum_a \mu_{ai}^{(c)}(t) = 0$ (since under the $2$-Hop Relay Algorithm, a source node would never get
back a packet that it generates). To calculate $\sum_b
\overline{\mu}_{ib}^{(c)}$, we note that
the outgoing service rate for packets generated by the source is
equal to the sum of the rate at which the source is scheduled to
transmit directly to its destination and the rate at which it is
scheduled to transmit type $c$ packets to any of the relay nodes. Let these rates be
$r_1$ and $r_2$ respectively. Also let the transmission rate at which it is scheduled to 
transmit relay packets to their destinations be $r_3$. 
Since the $2$-Hop Relay Algorithm only schedules
transmissions of these types,
the total rate of transmissions over the
network is given by $N(r_1 + r_2 + r_3)$. Using the probability of
choosing source-relay and relay-destination transmissions, we have:
$r_2 = \frac{1-\delta}{1 + \delta}r_3$.
In the $2$-Hop Relay Algorithm, a direct source-to-destination
transmission is scheduled whenever there is a source-destination
pair in the same cell or there is only $1$ node in a cell and its
destination is in an adjacent cell (and independent of the actual queue backlog values). Thus, 
using the definitions of $q$ and $q'$ from the statement of
Theorem \ref{thm:capacity}, we have:
$Nr_1 = C(R_1q + R_2q')$.
Similarly, the sum total transmissions in the network can be
expressed in terms of the quantities $p$ and $p'$ as follows:
$N(r_1 + r_2 + r_3) = C(R_1p + R_2p')$.
Using these to solve for $r_1, r_2, r_3$ and simplifying, we have 
\begin{align}
&r_1 = \mu (1 - \kappa), \;\; r_2 = \mu \kappa (1 - \delta), \;\; r_3 = \mu \kappa (1 + \delta)
\label{eq:r1}
\end{align}
where $\kappa \defequiv \frac{R_1p + R_2p' - R_1q - R_2q'}{R_1p + R_2p' + R_1q + R_2q'}$. Note that $0 < \kappa < 1$ (since $p > q$ and $p' > q'$).
Therefore, we have:
\begin{align*}
\sum_b \overline{\mu}_{ib}^{(c)} = r_1 + r_2 = \mu - \mu \kappa \delta 
\end{align*}
Let $\delta = \frac{1-\rho}{4}$ and  $\alpha \gamma^d = \frac{\delta}{2N^2} = \frac{1-\rho}{8N^2}$. 
Note that this choice of $\delta$ represents a valid probability since $0 \leq \rho < 1$.
Then, using (\ref{eq:stat6-1}), the last term of (\ref{eq:delayed}) under this case can be expressed as:
\begin{align*}
&\expect{\sum_b \mu_{ib}^{(c)}(t) - \sum_a \mu_{ai}^{(c)}(t) - A_i^{(c)}(t) |\vec{U}(t-d)} \geq \\
&\Big( \sum_b \overline{\mu}_{ib}^{(c)} \Big) (1 -  2N\alpha \gamma^d) - \lambda =  (\mu - \mu \kappa \delta)(1 - \frac{\delta}{N}) - \rho\mu \\
& \geq \mu \Big[ (1-\delta)^2 - \rho \Big] \geq \mu(1 - 2 \delta - \rho) = \frac{\mu(1 - \rho)}{2}
\end{align*}
where we used the fact that $(1 - \kappa \delta)(1 - \frac{\delta}{N}) \geq (1 - \delta)^2$.

\emph{$2)$ Node $i$ Relays Type $c$ Packets:} Note that $N > 2$ for this case to happen.
From our traffic model, we know that in this case
$A_{i}^{(c)}(t) = 0$ for all $t$. Further, under the
$2$-Hop Relay Algorithm, $\mu_{ai}^{(c)}(t)
> 0$ only if node $a$ is the source for type $c$ packets.
Also $\mu_{ib}^{(c)}(t) > 0$ only if $b = c$. 
To compute $\sum_b \overline{\mu}_{ib}^{(c)}$ and $\sum_a \overline{\mu}_{ai}^{(c)}$ for this case, note
that the $2$-Hop Relay Algorithm schedules relay transmissions such that all $(N-2)$ relay packet types are
equally likely. Thus we have:
\begin{align*}
\sum_b \overline{\mu}_{ib}^{(c)} = \frac{r_3}{N-2}, \;\; \sum_a \overline{\mu}_{ai}^{(c)} = \frac{r_2}{N-2}
\end{align*}
Let $\delta = \frac{1-\rho}{4}$ and  $\alpha \gamma^d = \frac{\delta}{2N^2} = \frac{1-\rho}{8N^2}$. 
Then, using (\ref{eq:stat6-1}), (\ref{eq:stat6-2}), the last term of (\ref{eq:delayed}) under this case can be expressed as:
\begin{align*}
&\expect{\sum_b \mu_{ib}^{(c)}(t) - \sum_a \mu_{ai}^{(c)}(t) - A_i^{(c)}(t) |\vec{U}(t-d)} \\
&\geq \Big( \sum_b \overline{\mu}_{ib}^{(c)} \Big) (1 -  2N\alpha \gamma^d ) - \Big( \sum_a \overline{\mu}_{ai}^{(c)} \Big) (1 +  2N\alpha \gamma^d ) \\
&= \Big( \sum_b \overline{\mu}_{ib}^{(c)} -\sum_a \overline{\mu}_{ai}^{(c)} \Big) - \Big( \sum_b \overline{\mu}_{ib}^{(c)} + 
\sum_a \overline{\mu}_{ai}^{(c)} \Big) \frac{\delta}{N} \\
&= \frac{(r_3 - r_2) - \frac{(r_3+r_2)\delta}{N}}{N-2}
= \frac{2\mu \kappa\delta}{N-2} \Big(1 - \frac{1}{N}\Big) \geq \frac{ \mu \kappa (1-\rho)}{2N}
\end{align*}
where we used (\ref{eq:r1}). 
Combining these two cases,  with $\delta =  \frac{1 - \rho}{4}$ and $\alpha \gamma^d = \frac{1-\rho}{8N^2}$:
\begin{align*}
\expect{\sum_b \mu_{ib}^{(c)}(t)-\!\sum_a \mu_{ai}^{(c)}(t) -\!A_{i}^{(c)}(t)|\vec{U}(t-d)}\!\geq \frac{\mu \kappa (1-\rho)}{2N}
\end{align*}
Using this in (\ref{eq:delayed}), we get:
\begin{align*}
\Delta(t, d) \leq BN(2d+1)  
- \frac{\mu \kappa (1-\rho)}{N} \sum_{i=1}^N\sum_{c \neq i} {U^{(c)}_i(t-d)}
\end{align*}
This is in a form that fits (\ref{eq:ldlemma}). Using the Lyapunov
Drift Lemma, we get
\begin{align}
\limsup_{t\rightarrow\infty} \frac{1}{t}\sum_{\tau=0}^{t-1}\sum_{i
\neq c}\expect{U_i^{(c)}(\tau) } \leq \frac{BN^2(2d+1)}{\mu \kappa (1-\rho)}
\label{eq:delay_bound2}
\end{align}
The total input rate into the network is $N\lambda$. Thus,  using
Little's Theorem, the average delay per packet is bounded by
$\frac{{BN(2d+1)}}{\lambda\mu\kappa(1-\rho)}$. 
\end{proof}

\subsection{Discussion and Simulation Example}

The proof of the capacity for the
cell-partitioned network can be used to consider more general
scheduling restrictions. From (\ref{eq:three}), it amounts to:
\begin{align*}
\lambda \leq \frac{1}{2N}\expect{\max_{\omega\in \Psi}
\sum_{c=1}^{C}\Big(Y_c^\omega(t) + X_{10,c}^\omega(t) + X_{01,c}^\omega(t)\Big)}
\end{align*}
If the bound on the right hand side can be achieved by any policy
(potentially randomized) that takes decisions only as a function of
the current network state, then we can design a deterministic policy
that is throughput optimal by scheduling to maximize
$\sum_{c=1}^C\Big(Y_c^\omega(t) + X_{10,c}^\omega(t) + X_{01,c}^\omega(t)\Big)$
subject to the network restrictions. For the specific
cell-partitioned model considered here, this maximization is
achieved by following the preference order of the decision variables
in each cell separately as described earlier. This enables us to exactly compute the
capacity of the network. It is possible to do the same for
extensions to this model involving other constraints. For example,
under the constraint that a user cannot simultaneously
transmit and receive, the above maximization becomes a
maximum-weight match problem. 
Similarly, one could allow more than one transmitter per cell, in which case we would need to define
more indicator decision variables for all possible control options.


\begin{figure}
\centering
\includegraphics[width=9cm, height=4.5cm, angle=0]{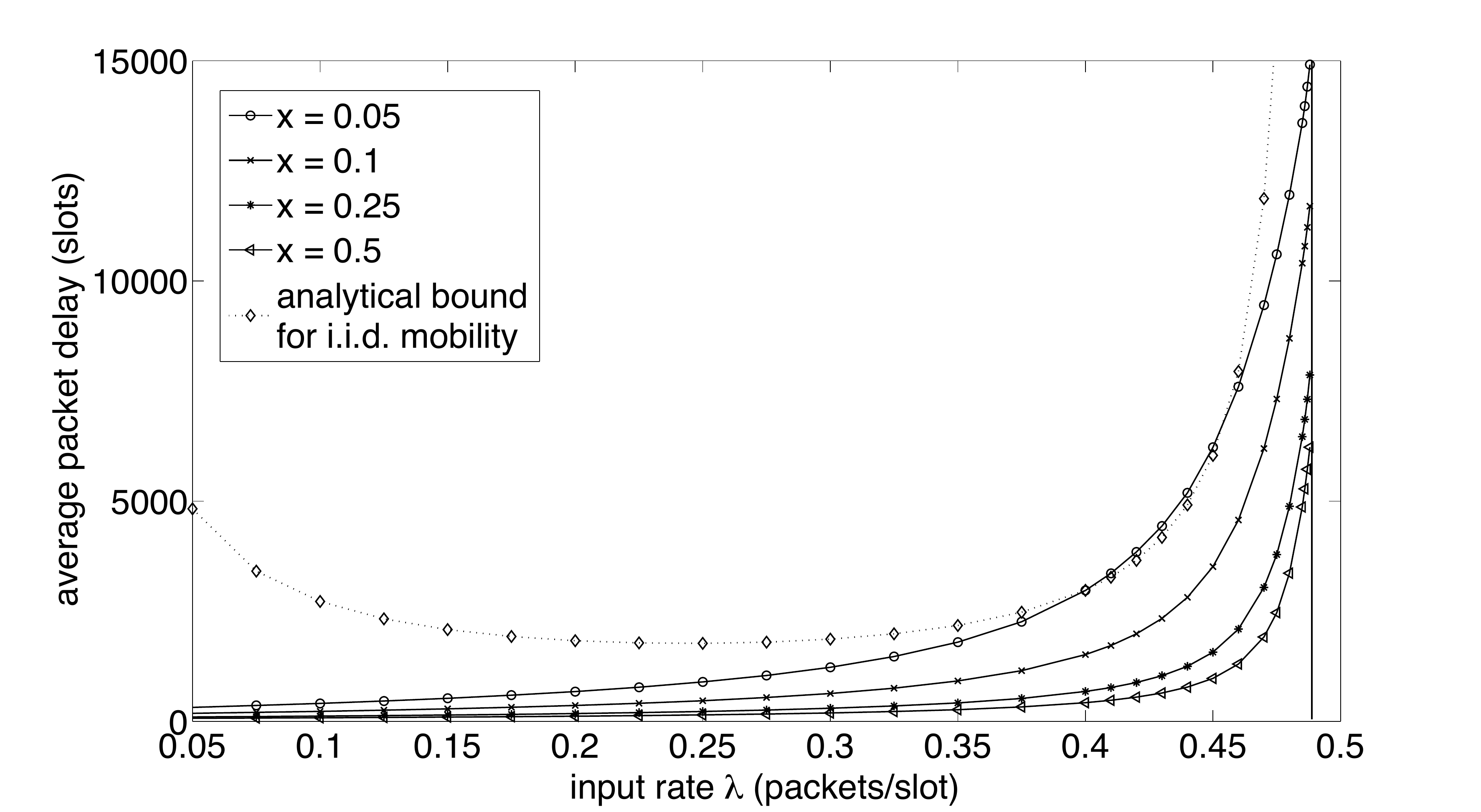}
\caption{Average packet delay under the $2$-Hop Relay Algorithm in a
network of $16$ cells with $20$ nodes for different mixing times of the mobility process.}
\label{fig:2}
\end{figure}

We next consider an example network consisting of $20$ nodes and $16$ cells as shown in Fig. \ref{fig:1}. 
The nodes move from one cell to another independently according to a Markovian random walk. 
Specifically, at the end of every slot, a node stays in its current cell with probability $(1-x)$, else it 
decides to move randomly one step in either the North, West, South, or East directions with probability $x$. 
If there is no feasible adjacent cell, 
then the user remains in the current cell. 
It can be shown that the resulting steady-state location distribution is uniform over all cells for all $0 \leq x < 1$. 
Thus, $\pi_c = \frac{1}{16}$ for all cells $c$.
Next we assume that $R_1 = 2$ and $R_2 = 1$ packets/slot. Then using Theorem \ref{thm:capacity}, the capacity for this network 
is given by $\mu = \frac{R_1q + R_1p + R_2q' + R_2p'}{2 \theta}$ and can be calculated exactly.
Specifically, we get $p = 0.358, q = 0.038, p' = 0.357, q' = 0.073$ and the network capacity is given by $\mu = 0.489$ packets/slot.

We next simulate the $2$-Hop Relay Algorithm on this network. 
New packets arrive at each source node according
to independent Bernoulli processes, so that a single
packet arrives i.i.d. with probability $\lambda$ every slot.
In Fig. \ref{fig:2}, we plot the average packet delay vs. $\lambda$ for different
values of $x$. We also plot the analytical bound (\ref{eq:delay_bound1}) of Theorem \ref{thm:delay_bound}
for the i.i.d. mobility case (for which $d = 0$).
It can be seen that the average delay goes to infinity as $\lambda$ is
pushed closer to the capacity $\mu = 0.489$ packets/slot (shown by the vertical line in Fig. \ref{fig:2}). 
While the network capacity is the same for
all values of $x$ (since $x$ does not affect the steady-state location distribution),  the average delay increases 
as $x$ becomes smaller. This is because a smaller $x$ implies a larger value for the parameter $d$ leading to
larger delay as suggested by the delay bound (\ref{eq:delay_bound1}) in Theorem \ref{thm:delay_bound}. 
Thus, the $2$-Hop Relay Algorithm is able to support all input rates within the network capacity with 
finite average delay. However, its delay performance is not necessarily the best. For example,
when the input rate is small (say $\lambda = 0.1$ packets/slot), the average delay is more than $100$ slots. 
Note that the $2$-Hop Relay Algorithm makes scheduling
decisions purely based on the current user locations and restricts all
packets to at most $2$ hops. 
It does not attempt to optimize the delay in the network. 
The delay performance may be improved using alternative scheduling strategies
that do not restrict packets to at most $2$ hops.
For example, backlog aware scheduling and routing (e.g., \cite{neely-thesis}) or schemes that exploit the mobility pattern of
the users (e.g., \cite{grossglauser-vetterli}) may offer better delay performance.

\section{Minimum Energy Function}
\label{section:energy}
We now investigate the minimum energy function of the
cell-partitioned network under consideration. Recall that in our network model,
each user either uses zero power or full power. Further, 
$R_1 (R_2)$ packets can
be transmitted from the sender to the receiver in the same (adjacent) cell if the sender
uses full power. 

The minimum energy
function $\Phi(\lambda)$ is defined as the minimum time average
energy required to stabilize an input rate $\lambda$ per user,
considering {all} possible scheduling and routing algorithms that
conform to the given network structure. We exactly compute this
function for our network model. Specifically, we assume that all
users receive packets at the same rate (i.e., $\lambda_i = \lambda$
for all $i$). Also, we consider the case when $R_1 \geq 2R_2$ ($\Phi(\lambda)$ for the case when $R_1 < 2R_2$ has a different
expression, but the proof is similar).

\begin{thm} The minimum energy function $\Phi(\lambda)$ per user for
the cell-partitioned network as described in Sec. \ref{section:model} with $R_1 \geq 2R_2$ 
is a piecewise linear curve given by the following:
\begin{displaymath}
\Phi(\lambda) = \left\{ \begin{array}{ll}
\frac{\lambda}{R_1} & \textrm{if $C_1$}\\
\frac{q}{\theta} + \frac{2}{R_1}\Big(\lambda - \frac{2R_1q}{2\theta}\Big) & \textrm{if $C_2$}\\
\frac{p}{\theta} + \frac{1}{R_2}\Big(\lambda - \frac{R_1(p+q)}{2\theta}\Big) & \textrm{if $C_3$}\\
\frac{p+q'}{\theta} + \frac{2}{R_2}\Big(\lambda - \frac{R_1(p+q)+2R_2q'}{2\theta}\Big) & \textrm{if $C_4$}
\end{array} \right.
\end{displaymath}
\label{thm:energy}
\end{thm}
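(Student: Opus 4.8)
The plan is to mirror the two-part structure used for Theorem~\ref{thm:capacity}: first a converse (every stabilizing policy with input rate $\lambda$ per user must spend at least $\Phi(\lambda)$ time-average power), then an achievability argument via an explicit stationary randomized policy whose average power is arbitrarily close to $\Phi(\lambda)$. For the converse, I would parametrize the "work" each cell can do: in a slot, a cell with a given user configuration can (by the preference order $I^1 \succ I^2 \succ I^3 \succ I^4$ established in the proof of necessity, valid here since $R_1 \geq 2R_2$) deliver packets toward destinations at rates weighted $2R_1, R_1, 2R_2, R_2$ respectively, but each such transmission costs exactly one unit of power. So if a stabilizing policy uses, on average per cell per slot, probabilities $\beta_1,\beta_2,\beta_3,\beta_4$ of activating each transmission \emph{mode} (source--dest same cell, source--relay or relay--dest same cell, source--dest adjacent, source--relay or relay--dest adjacent), feasibility requires $\beta_j \leq$ (steady-state probability that mode $j$'s opportunity exists), flow conservation forces the total delivered rate $\frac{C}{N}(2R_1\beta_1 + R_1\beta_2 + 2R_2\beta_3 + R_2\beta_4) \geq 2\lambda$ roughly (each relay packet consumes two transmissions), and the power is $\frac{C}{N}(\beta_1+\beta_2+\beta_3+\beta_4)$. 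Thus $\Phi(\lambda)$ is the optimal value of a small linear program: minimize $\sum_j \beta_j$ subject to the rate constraint and the opportunity upper bounds $\beta_1 \leq q$, $\beta_1+\beta_2 \leq p$ (and the adjacent analogues involving $q', p'$). I would write this LP carefully, accounting for the factor-of-two asymmetry between direct and two-hop delivery, using Lemma~\ref{lem:lem1}/ergodicity to replace time averages by steady-state expectations.

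The key structural observation — which produces the piecewise-linear form with four pieces $C_1,\dots,C_4$ — is that this LP is a fractional knapsack / greedy problem: to deliver rate while minimizing power, one should fill up the "most rate-efficient" mode first. The rate-per-unit-power values are $2R_1$ (direct same cell), then $R_1$ (relay same cell), then $2R_2$ (direct adjacent), then $R_2$ (relay adjacent), and since $R_1 \geq 2R_2$ this ordering is exactly $2R_1 \geq R_1 \geq 2R_2 \geq R_2$ — the same preference order. So the optimal solution saturates mode $1$ up to its cap (giving the breakpoint at $\lambda = R_1 q / \theta$, the boundary of region $C_1$), then saturates mode $2$ up to its cap (breakpoint $\lambda = R_1(p+q)/2\theta$, boundary of $C_2$/$C_3$), then mode $3$ (breakpoint involving $q'$), then mode $4$. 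Computing the slope of $\Phi$ on each piece: on $C_1$ every delivered packet needs one direct transmission at rate $R_1$, so $\Phi = \lambda/R_1$; once mode~$1$ is exhausted, extra rate comes from same-cell relaying which costs \emph{two} transmissions per delivered packet each carrying $R_1$, hence slope $2/R_1$; then $1/R_2$ once we switch to direct adjacent, then $2/R_2$ for adjacent relaying. I would verify the four affine pieces match at the breakpoints and reproduce the stated formulas (the constants $q/\theta$, $p/\theta$, $(p+q')/\theta$ are precisely the accumulated power at each breakpoint), and identify the intervals $C_1,\dots,C_4$ explicitly in terms of $\lambda$.

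For achievability I would exhibit a stationary randomized scheduling policy, parametrized by a small $\epsilon>0$, that in each cell activates transmission modes according to the greedy LP solution for rate $\lambda(1+\epsilon)$ — i.e.\ it opportunistically uses up the cheaper modes first and only invokes more expensive ones with the minimal probability needed — and which additionally routes relay packets uniformly over the $N-2$ relay types as in the $2$-Hop Relay Algorithm. Since this policy depends only on current user locations, its service rates are functionals of the ergodic mobility chain; a Lyapunov drift argument identical in form to the proof of Theorem~\ref{thm:delay_bound} (using Lemma~\ref{lem:lem1} and the delayed-backlog bound) shows the queues are stable, hence the policy is feasible, while its time-average power is by construction within $O(\epsilon)$ of $\Phi(\lambda)$. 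Letting $\epsilon \to 0$ gives $\Phi(\lambda)$ as an infimum; combined with the converse this proves the theorem. The main obstacle I anticipate is bookkeeping the factor-of-two carefully and correctly in the flow-conservation constraint and in the two distinct "relay" modes (a same-cell relay transmission might be either source$\to$relay or relay$\to$destination, and only the latter actually delivers a packet), so that the LP's breakpoints land exactly at the stated values; once the LP is set up correctly, the greedy solution and its piecewise-linear value are essentially forced.
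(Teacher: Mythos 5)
Your proposal is correct and follows essentially the same route as the paper: the converse is the same small linear program with nested opportunity caps ($\beta_1\leq q$, $\beta_1+\beta_2\leq p$, \dots) solved greedily using the efficiency ordering $2R_1\geq R_1\geq 2R_2\geq R_2$ implied by $R_1\geq 2R_2$ — the paper merely parametrizes it by the delivered-packet path types $x_{ab}$ and extracts the four affine pieces via four successive relaxations $\tilde{\Omega}_0,\dots,\tilde{\Omega}_3$ rather than one fractional-knapsack pass, which yields identical breakpoints and slopes. Your achievability argument is exactly the paper's Minimum Energy Algorithm (greedily saturate cheaper modes, invoke costlier ones with only the minimal extra probability, restrict to $2$ hops) analyzed with the same delayed Lyapunov drift machinery as Theorem \ref{thm:delay_bound}.
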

where 
$C_1 \equiv 0 \leq \lambda < \frac{R_1q}{\theta}, C_2 \equiv \frac{R_1q}{\theta} \leq \lambda < \frac{R_1(p+q)}{2\theta},
C_3 \equiv \frac{R_1(p+q)}{2\theta} \leq \lambda < \frac{R_1(p+q)+2R_2q'}{2\theta},
C_4 \equiv \frac{R_1(p+q)+2R_2q'}{2\theta} \leq \lambda < \mu$.
Thus, the network can stably support users simultaneously
communicating at any rate $\lambda < \mu$ with an energy cost that
can be pushed arbitrarily close to $\Phi(\lambda)$ (at the cost of
increased delay). We prove the theorem in two parts. First, we
establish the necessary condition by deriving a lower bound on the
energy cost of \emph{any} stabilizing algorithm. Then, we establish
sufficiency by presenting a specific scheduling policy and showing
that the average delay is bounded under that policy.

\subsection{Proof of Necessity} 
\label{section:energy_proof}

\begin{proof}
Consider any scheduling strategy that stabilizes the system. Let $X_{ab}(T)$ denote the
number of packets delivered by the strategy from sources to
destinations in time interval $(0, T)$ that involves exactly $a$
same cell and $b$ adjacent cell transmissions. For simplicity,
assume that the strategy is ergodic and yields well defined time
average energy expenditure per user $\overline{e}$ and well defined time average values
for $x_{ab}$ where:
\begin{align}
x_{ab} \ \defequiv \ \lim_{T\rightarrow\infty} \frac{X_{ab}(T)}{T}
\end{align}
The average energy cost per user $\overline{e}$ of this policy satisfies:
\begin{align}
\overline{e} \geq \sum_{a,b}\Big(\frac{a}{R_1} + \frac{b}{R_2}\Big)\frac{x_{ab}}{N}
 \label{eq:nine}
\end{align}
This follows by noting that enough packets may not be available
during a transmission.

Note that $x_{00} = 0$, and so the only possible non-zero $x_{ab}$
variables are for $(a, b)$ pairs that are integers, non-negative,
and such that $(a, b) \neq (0, 0)$. Let $x = (x_{ab})$ represent the
collection of $x_{ab}$ variables, and note that these variables must
satisfy the constraint $x \in \Omega_0 \cap \Omega_1 \cap \Omega_2
\cap \Omega_3$, where the four constraint sets are defined below:
\begin{align*}
& \Omega_0 \defequiv \Bigg\{x \Bigg| \sum_{(a,b) \neq (0,0)}x_{ab} = N\lambda\Bigg\} 
\qquad \Omega_1 \defequiv \Bigg\{x \Bigg| \frac{x_{10}}{R_1} \leq c_1 \Bigg\} \\
& \Omega_2 \defequiv \Bigg\{x \Bigg| \frac{1}{R_1}\sum_{a}ax_{a0} \leq c_1 + c_2 \Bigg\} \\
& \Omega_3 \defequiv \Bigg\{x \Bigg| \frac{1}{R_1}\sum_{a}ax_{a0} + \frac{x_{01}}{R_2} \leq c_1 + c_2 + c_3\Bigg\}
\end{align*}
where $c_1$ is the maximum rate of source-destination transmission
opportunities in the same cell, $c_1 + c_2$ is the maximum
rate of all possible same cell transmission opportunities 
and $c_1 + c_2 + c_3$ is the maximum rate of all same cell or
source-destination adjacent cell transmission opportunities. Here,
these quantities are summed over all cells.
Using the definitions of $p, q$ and $q'$ from the statement of Theorem \ref{thm:capacity}, we know that 
$c_1 = Cq, c_1 + c_2 = Cp, c_1 + c_2 + c_3 = C(p +
q')$. For example, $(c_1 + c_2 + c_3)$ can be written as
$\frac{1}{T} \sum^T_{t=0} \Big( X_1(t) + X_2(t) + X_3(t) \Big)$ where $X_1(t)$
is the maximum number of direct same cell opportunities, $X_2(t)$ is
the maximum number of indirect same cell opportunities \emph{given}
all direct opportunities are used and $X_3(t)$ is the maximum number
of direct adjacent cell opportunities \emph{given} all same cell
opportunities are used. Since only one of these three opportunities
can used is a given cell in a timeslot, the maximum total sum is
fixed and hence $c_1 + c_2 + c_3 = C(p + q')$.

Define $f(x) \defequiv \sum_{a,b}\Big(\frac{a}{R_1} + \frac{b}{R_2}\Big)\frac{x_{ab}}{N}$, which is simply the right hand side of
(\ref{eq:nine}). Because $\overline{e}\geq f(x)$, and because $x \in \Omega_0
\cap \Omega_1 \cap \Omega_2 \cap \Omega_3$, we have:
\begin{align}
\overline{e} \geq \inf_{x\in \Omega_0 \cap \Omega_1 \cap \Omega_2 \cap
\Omega_3} f(x)
 \label{eq:10}
\end{align}
Furthermore, for any function $g(x)$ such that $g(x) \leq f(x)$ for
all $x$, and for any set $\tilde{\Omega}$ that contains the set
$\Omega_0 \cap \Omega_1 \cap \Omega_2 \cap \Omega_3$, we have:
\begin{align}
\overline{e} \geq \inf_{x\in \tilde{\Omega}} g(x)
 \label{eq:11}
\end{align}
This follows because the function to be minimized is smaller, and
the infimum is taken over a less restrictive set.
We now define four new constraint sets $\tilde{\Omega}_0,
\tilde{\Omega}_1, \tilde{\Omega}_2, \tilde{\Omega}_3$ as follows:
\begin{align*}
& \tilde{\Omega}_0 \defequiv \Omega_0 \; \tilde{\Omega}_1 \defequiv \Omega_1 \; 
\tilde{\Omega}_2 \defequiv \Bigg\{x \Bigg| \frac{x_{10}}{R_1}+ \frac{2}{R_1}\sum_{a \geq 2}x_{a0} \leq c_1 + c_2 \Bigg\} \\
& \tilde{\Omega}_3 \defequiv \Bigg\{x \Bigg| \frac{x_{10}}{R_1}+ \frac{2}{R_1}\sum_{a \geq 2}x_{a0} + \frac{x_{01}}{R_2} \leq c_1 + c_2 + c_3 \Bigg\}
\end{align*}
It can be seen that each of $\Omega_0, \Omega_1, \Omega_2, \Omega_3$
is a subset of $\tilde{\Omega}_0, \tilde{\Omega}_1,
\tilde{\Omega}_2, \tilde{\Omega}_3$. Therefore, $\Omega_0 \cap
\Omega_1 \cap \Omega_2 \cap \Omega_3$ is a subset of
$\tilde{\Omega}_0 \cap \tilde{\Omega}_1 \cap \tilde{\Omega}_2 \cap
\tilde{\Omega}_3$. 
Note that since $\frac{2}{R_1} \leq \frac{1}{R_2}$, 
we have the following:
\begin{align}
\frac{1}{R_1} < \frac{2}{R_1} \leq \frac{1}{R_2} < \frac{2}{R_2}
 \label{eq:12}
\end{align}
We now compute four different bounds for $\overline{e}$, each
having the form $\overline{e} \geq \alpha \lambda + \beta$. These bounds define
the four piecewise linear regions of $\Phi(\lambda)$. 

\begin{enumerate}

\item First note that $f(x) \geq \frac{1}{R_1}\sum_{a,b} \frac{x_{ab}}{N}$. This follows from the
definition of $f(x)$.
Therefore taking $g(x) = \frac{1}{R_1}\sum_{a,b} \frac{x_{ab}}{N}$, we have:
\begin{align*}
\overline{e} \geq \inf_{x \in \tilde{\Omega}_0} \frac{1}{R_1}\sum_{a,b} \frac{x_{ab}}{N}
\end{align*}
Because $\tilde{\Omega}_0$ is given by $\sum_{a,b}x_{ab} = N\lambda$, the above infimum is equal to
$\frac{\lambda}{R_1}$. Thus, we have our first linear constraint
for any algorithm that yields a time average energy of $\overline{e}$:
\begin{align}
\overline{e} \geq \frac{\lambda}{R_1}
 \label{eq:13}
\end{align}

\item Next note that $f(x) \geq \frac{x_{10}}{N R_1} + \frac{2}{R_1}\sum_{\substack{a,b \\(a,b)\neq (1,0)}} \frac{x_{ab}}{N}$.
This is because $\frac{a}{R_1} + \frac{b}{R_2} \geq \frac{2}{R_1}$
for any non-negative integer pair $(a, b)$ such that $(a, b) \neq
\{(0, 0), (1, 0)\}$ (using (\ref{eq:12})). 
Therefore, taking this lower bound of $f(x)$ as $g(x)$, we have:
\begin{align*}
\overline{e} \geq \inf_{x \in \tilde{\Omega}_0 \cap \tilde{\Omega}_1} 
\Bigg[\frac{x_{10}}{N R_1} + \frac{2}{R_1}\sum_{\substack{a,b \\(a,b)\neq (1,0)}} \frac{x_{ab}}{N} \Bigg]
\end{align*}
The right hand side is equal to the solution of the following:
\begin{align*}
\textrm{Minimize:} \qquad & \frac{x_{10}}{N R_1} + \frac{2}{R_1}\sum_{\substack{a,b \\(a,b)\neq (1,0)}} \frac{x_{ab}}{N} \\
\textrm{Subject to:} \qquad & \sum_{a,b} x_{ab} = N\lambda \\
& \frac{x_{10}}{R_1} \leq c_1
\end{align*}
The above optimization is equivalent to minimizing
$\frac{x_{10}}{NR_1} + \frac{2}{NR_1}(N\lambda - x_{10})$ subject to
$\frac{x_{10}}{R_1} \leq c_1$. The solution is clearly to choose
$x_{10} = R_1c_1$, and hence we have:
\begin{align}
\overline{e} \geq \frac{2\lambda}{R_1} - \frac{c_1}{N} = \frac{q}{\theta} +
\frac{2}{R_1}\Big(\lambda - \frac{2R_1q}{2\theta}\Big)
 \label{eq:14}
\end{align}

\item  Next we have 
\begin{align*}
f(x) \geq \frac{x_{10}}{N R_1} + \frac{2}{R_1}\sum_{{a \geq 2}} \frac{x_{a0}}{N} + \frac{1}{R_2}\sum_{\substack{a,b\\b \neq 0}} \frac{x_{ab}}{N}
\end{align*}
which follows from the definition of $f(x)$
and because $\frac{1}{R_2} \leq \frac{b}{R_2}$ for all positive $b \geq 1$.
Thus, taking this lower bound of $f(x)$ as $g(x)$, we have:
\begin{align*}
\overline{e} \geq \inf_{x \in \tilde{\Omega}_0 \cap \tilde{\Omega}_1 \cap
\tilde{\Omega}_2} \Bigg[ \frac{x_{10}}{NR_1} + \frac{2}{R_1}\sum_{{a
\geq 2}} \frac{x_{a0}}{N} + \frac{1}{R_2}\sum_{\substack{a,b
\\b \neq 0}} \frac{x_{ab}}{N} \Bigg]
\end{align*}
This is equivalent to the following minimization:
\begin{align*}
\textrm{Minimize:} &\;  \frac{x_{10}}{NR_1} + \frac{2}{NR_1}\sum_{{a \geq 2}} x_{a0} \\
&\qquad + \frac{1}{NR_2}\Big(N\lambda - x_{10} - \sum_{{a \geq 2}} x_{a0}\Big) \\
\textrm{Subject to:}& \qquad \qquad \frac{x_{10}}{R_1} \leq c_1 \\ 
&\; \frac{x_{10}}{R_1} + \frac{2}{R_1}\sum_{{a \geq 2}}x_{a0} \leq c_1 + c_2
\end{align*}
where we have aggregated the constraint $\sum_{a,b} x_{ab} =
N\lambda$ into the objective. The coefficients multiplying $x_{10}$ and $\sum_{a\geq 2}x_{a0}$ are both negative,
so that the
above optimization is solved when $x_{10} + 2\sum_{a\geq 2}x_{a0} = R_1(c_1 + c_2)$. Similarly, it can be
shown that above optimization is solved when $x_{10} = R_1c_1$.
This yields:
\begin{align}
\overline{e} &\geq \frac{\lambda}{R_2} + \frac{(c_1 + c_2)}{N} - \frac{R_1}{NR_2}\Big(c_1 + \frac{c_2}{2}\Big) \nonumber \\
&= \frac{p}{\theta} + \frac{1}{R_2}\Big(\lambda - \frac{R_1(p+q)}{2\theta}\Big)
 \label{eq:15}
\end{align}

\item Finally, note that 
\begin{align*}
f(x) \geq \frac{x_{10}}{NR_1} + \frac{2}{R_1}\sum_{{a \geq 2}} \frac{x_{a0}}{N} + \frac{x_{01}}{N R_2} + \frac{2}{R_2}\sum_{b \geq 2} \frac{x_{ab}}{N}
\end{align*}
which follows from the definition of $f(x)$
as well as because $\frac{2}{R_2} \leq \frac{b}{R_2}$ for all $b \geq 2$. 
Taking this lower bound of $f(x)$ as $g(x)$, we have:
\begin{align*}
\overline{e} \geq \inf_{x \in \tilde{\Omega}} \Bigg[ \frac{x_{10}}{NR_1} +
\frac{2}{R_1}\sum_{a \geq 2} \frac{x_{a0}}{N} + \frac{x_{01}}{NR_2} +
\frac{2}{R_2}\sum_{b \geq 2} \frac{x_{ab}}{N} \Bigg]
\end{align*}
where  $\tilde{\Omega} = \tilde{\Omega}_0 \cap \tilde{\Omega}_1 \cap
\tilde{\Omega}_2 \cap \tilde{\Omega}_3$. This is equivalent to the
following minimization (using $\sum_{a,b} x_{ab} = N\lambda$):
\begin{align*}
\textrm{Minimize:} \; &\frac{x_{10}}{NR_1} + \frac{2}{NR_1}\sum_{a \geq 2} x_{a0} + \frac{x_{01}}{NR_2}  \\
& + \frac{2}{NR_2} \Big(N\lambda - x_{10} - \sum_{{a \geq 2}}x_{a0} - x_{01}\Big) \\
\textrm{Subject to:} \; & \frac{x_{10}}{R_1} \leq c_1 \\
\; & \frac{x_{10}}{R_1} + \frac{2}{R_1}\sum_{a \geq 2}x_{a0} \leq c_1 + c_2 \\
\; & \frac{x_{10}}{R_1} + \frac{2}{R_1}\sum_{a \geq 2}x_{a0} +
\frac{x_{01}}{R_2} \leq c_1 + c_2 + c_3
\end{align*}
Letting $y = \sum_{a \geq 2}x_{a0}$ and simplifying the
optimization metric, the above optimization is equivalent to:
\begin{align*}
\textrm{Minimize:} \; & \frac{x_{10}}{N}\Big(\frac{1}{R_1} - \frac{2}{R_2}\Big) + \frac{y}{N}\Big(\frac{2}{R_1} - \frac{2}{R_2}\Big)  \\
& - \frac{x_{01}}{NR_2} + \frac{2\lambda}{R_2} \\
\textrm{Subject to:} \; & \frac{x_{10}}{R_1} \leq c_1 \\
\; & \frac{x_{10}}{R_1} + \frac{2y}{R_1} \leq c_1 + c_2 \\
\; & \frac{x_{10}}{R_1} + \frac{2y}{R_1} + \frac{x_{01}}{R_2}
\leq c_1 + c_2 + c_3
\end{align*}
The above optimization is solved when $x_{10} = R_1c_1$, $x_{10}+2y
= R_1(c_1+c_2)$ and $x_{01} = R_2c_3$. We thus have:
\begin{align}
\overline{e} &\geq \frac{2\lambda}{R_2} + \frac{(c_1 + c_2)}{N} - \frac{R_1}{NR_2}(2c_1 + c_2) - \frac{c_3}{N} \nonumber \\
&= \frac{p+q'}{\theta} + \frac{2}{R_2}\Big(\lambda - \frac{R_1(p+q)+2R_2q'}{2\theta}\Big)
 \label{eq:16}
\end{align}
\end{enumerate}
The necessary set of conditions for $\Phi(\lambda)$ function are
obtained by combining these four bounds. 
\end{proof}

\subsection{Proof of Sufficiency} 
\label{min_energy_algo}

Now we present an algorithm that makes stationary, randomized scheduling 
decisions independent of the actual queue backlog values and show that
for any feasible input rate $\lambda < \mu$, its average energy cost
can be pushed arbitrarily close to the minimum value $\Phi(\lambda)$
with bounded delay. However, the delay bound grows asymptotically as
the average energy is pushed closer to the minimum value.
Similar to the capacity achieving $2$-Hop Relay Algorithm, this algorithm also restricts packets to at most $2$ hops. However, the difference 
lies in that it \emph{greedily} chooses transmission opportunities involving smaller energy cost over other higher cost opportunities. 
An opportunity with higher cost is used \emph{only} when the given input rate cannot be supported using all of the low cost opportunities. 
Thus, depending on the input rate $\lambda$, the algorithm uses only a subset of the transmission opportunities as follows.
\begin{enumerate}
\item If $0 \leq \lambda < \frac{2R_1q}{2\theta}$, all packets are sent using only source-destination transmission opportunities in the same cell.
\item If $\frac{2R_1q}{2\theta} \leq \lambda < \frac{R_1(p+q)}{2\theta}$, all packets are sent either using source-destination transmission 
opportunities in the same cell or source-relay and relay-destination transmission opportunities in the
same cell. 
\item If $\frac{R_1(p+q)}{2\theta} \leq \lambda < \frac{R_1(p+q)+2R_2q'}{2\theta}$, all
packets are sent using same cell transmissions (in either direct transmission or relay modes),
or adjacent cell source-destination transmission opportunities. 
\item And finally, when 
$\frac{R_1(p+q)+2R_2q'}{2\theta} \leq \lambda < \mu$, all transmission
opportunities that restrict packets to at most $2$ hops are used.
\end{enumerate}

To make the presentation simpler, in the following, we only discuss the case
where $\frac{R_1q}{\theta} < \lambda < \frac{R_1(p+q)}{2\theta}$.
The basic idea and performance analysis for the other cases are similar.

Let $\lambda =  \frac{R_1q}{\theta} + \rho \frac{R_1(p-q)}{2\theta}$ where $0 < \rho < 1$ is a given constant.
Also define a control parameter $\beta$ (where $1 < \beta < {1}/{\rho}$) that is input to the algorithm. This parameter affects an energy-delay
tradeoff as shown in Theorem \ref{thm:energy_delay_bound}.

{\bf{Minimum Energy Algorithm:}} Every timeslot, for all
cells, do the following:
\begin{enumerate}
\item If there exists a source-destination pair in the cell,
randomly choose such a pair (uniformly over all such pairs in the
cell). If the source has new packets for the destination, transmit
at rate $R_1$. Else remain idle.

\item If there is no source-destination pair in the cell but there
are at least $2$ users in the cell, then with probability $\beta \rho$, decide to use
the same cell relay transmission opportunity as described in the next step. Else remain idle.

\item If decide to use the same cell relay transmission opportunity in step $(2)$,
randomly designate one user as the sender and another as the receiver. 
Then with probability $\frac{1-\delta}{2}$ (where $0 < \delta < 1$ and $\delta = \delta(\beta)$ to be determined later) 
perform the first action below. Else, perform the second.
\begin{enumerate}
\item \emph{Send new Relay packets in same cell}: If the transmitter
has new packets for its destination, transmit at rate $R_1$. Else
remain idle.
\item \emph{Send Relay packets to their Destination in same
cell}: If the transmitter has packets for the receiver, transmit at
rate $R_1$. Else remain idle.
\end{enumerate}
\end{enumerate}
Note that the above algorithm does not use any adjacent cell
transmission opportunities. All packets are sent over at most $2$
hops using only same cell transmissions. We now analyze the
performance of this algorithm.
\begin{thm} 
For the cell partitioned network (with $N$ nodes and $C$ cells) as described in Sec. \ref{section:model}, 
with minimum energy function $\Phi(\lambda)$ as described above, 
and user mobility model as described in Sec. \ref{section:mob_model},
the average energy cost $\overline{e}$ of the Minimum Energy Algorithm with input
rate $\lambda$ for each user such that $\lambda = \frac{R_1q}{\theta} + \rho \frac{R_1(p-q)}{2\theta}$ (where $0 < \rho < 1$), 
a control parameter $\beta$ (where $ 1 < \beta < {1}/{\rho}$), and with $\delta = \frac{\beta -1}{2 \beta}$ satisfies:
\begin{align}
\overline{e}
 = \Phi(\lambda) + (\beta-1) \rho \Big(\frac{p-q}{\theta}\Big) 
\label{eq:energy_bound}
\end{align}
and the average packet delay $\overline{D}$ satisfies:
\begin{align}
\overline{D} \leq \frac{4BN\theta (2d+1)}{\lambda R_1 (p-q) \rho (\beta-1)}
\label{eq:delay_bound3}
\end{align}
\label{thm:energy_delay_bound}
\end{thm}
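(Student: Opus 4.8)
The plan is to mirror the structure of the proof of Theorem~\ref{thm:delay_bound}: first verify the energy identity~(\ref{eq:energy_bound}) by computing the steady-state rates at which each transmission type is used under the Minimum Energy Algorithm, then establish the delay bound~(\ref{eq:delay_bound3}) via a Lyapunov drift argument on the same quadratic Lyapunov function $L(\vec{U}(t)) = \sum_{i=1}^N \sum_{c\neq i}(U^{(c)}_i(t))^2$, controlling the effect of the Markovian mobility through Lemma~\ref{lem:lem1}. Throughout I restrict to the stated regime $\frac{R_1 q}{\theta} < \lambda < \frac{R_1(p+q)}{2\theta}$, so only same-cell direct and same-cell relay transmissions are ever scheduled, and I use $\lambda = \frac{R_1q}{\theta} + \rho\frac{R_1(p-q)}{2\theta}$ with $\delta = \frac{\beta-1}{2\beta}$ and $1 < \beta < 1/\rho$.

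For the energy identity, I would argue as in the sufficiency proof of Theorem~\ref{thm:capacity}: over the steady-state location distribution, the rate of direct source-destination same-cell opportunities, summed over all cells, is $C R_1 q$, and the rate of all same-cell transmission opportunities is $C R_1 p$. Because the algorithm uses every direct opportunity (step~1) and uses a relay opportunity only with probability $\beta\rho$ in the remaining (non-source-destination, $\ge 2$-user) cells (step~2), the per-user rate of direct transmissions is $r_1 = \frac{R_1 q}{\theta}$ and the per-user rate of relay transmissions (summing source-relay and relay-destination) is $r_{\text{rel}} = \beta\rho\,\frac{R_1(p-q)}{\theta}$. Each direct transmission delivers packets at energy cost $1/R_1$ per packet; each relayed packet requires two transmissions, hence energy $2/R_1$ per delivered packet. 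The total delivered rate per user is $\lambda$, of which $\frac{R_1 q}{\theta}$ is direct and $\lambda - \frac{R_1 q}{\theta} = \rho\frac{R_1(p-q)}{2\theta}$ is relayed, so
\begin{align*}
\overline{e} = \frac{1}{R_1}\cdot\frac{R_1 q}{\theta} + \frac{2}{R_1}\Big(\lambda - \frac{R_1 q}{\theta}\Big) + \Big(\text{wasted relay-side transmissions}\Big).
\end{align*}
The key subtlety is that step~2 provisions relay \emph{opportunities} at rate $\beta\rho\frac{R_1(p-q)}{\theta}$ but only a $1/\beta$ fraction of this is actually needed to carry the relayed traffic (the split $\frac{1-\delta}{1+\delta} = \frac{1}{\beta}$ with $\delta = \frac{\beta-1}{2\beta}$ balances source-relay against relay-destination flow while leaving the network running "$\beta$ times over capacity" on the relay path); the surplus $(\beta-1)\rho\frac{R_1(p-q)}{2\theta}$ relayed-then-forwarded opportunities each cost an extra $2/R_1$, contributing precisely $(\beta-1)\rho\frac{p-q}{\theta}$ to $\overline{e}$ beyond $\Phi(\lambda) = \frac{q}{\theta} + \frac{2}{R_1}(\lambda - \frac{R_1 q}{\theta})$. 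Matching constants gives~(\ref{eq:energy_bound}).

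For the delay bound, I would set up the delayed-queue drift inequality exactly as in~(\ref{eq:delayed}), with the same constant $B$ from~(\ref{eq:B}) and the same integer $d$ chosen so that $\alpha\gamma^d = \frac{\delta}{2N^2}$ (with the present $\delta = \frac{\beta-1}{2\beta}$, so $\alpha\gamma^d = \frac{\beta-1}{4\beta N^2}$), and then split into the two cases "node $i$ generates type $c$" and "node $i$ relays type $c$". In the first case the steady-state outgoing rate for source packets is $r_1 + r_2 = \frac{R_1 q}{\theta} + \beta\rho\frac{R_1(p-q)}{2\theta}\cdot(1-\delta)\cdot\frac{2}{1+\delta}$-type bookkeeping, which after simplification exceeds $\lambda$ by a margin on the order of $\rho(\beta-1)\frac{R_1(p-q)}{\theta}$; in the second case $\sum_b\overline{\mu}_{ib}^{(c)} - \sum_a\overline{\mu}_{ai}^{(c)} = \frac{r_3 - r_2}{N-2}$ is positive by a comparable margin, and the Lemma~\ref{lem:lem1} error terms are absorbed by the choice of $d$ just as before. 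Combining the two cases yields a drift of the form $\Delta(t,d) \le BN(2d+1) - \epsilon\sum_{i,c}U^{(c)}_i(t-d)$ with $\epsilon$ proportional to $\frac{R_1(p-q)\rho(\beta-1)}{N\theta}$; invoking the Lyapunov Drift Lemma and Little's theorem (total input rate $N\lambda$) gives~(\ref{eq:delay_bound3}), with the factor $4\theta$ tracking the $\theta$ in the denominators of the rates and the factor $2\beta/(\beta-1)$ coming from $\delta$. The main obstacle I anticipate is the careful accounting in the relay case: one must check that the source-relay rate $r_2$ and relay-destination rate $r_3$, split by the step-3 coin with bias $\frac{1-\delta}{2}$, leave a strictly positive and correctly-sized drift for each of the $N-2$ relay packet types (which are serviced symmetrically), and that the mobility-induced multiplicative errors $(1 \pm 2N\alpha\gamma^d)$ do not eat this margin — this is exactly the place where the constraint $\beta < 1/\rho$ is used to keep the provisioned relay rate below the genuine capacity so that step~2's probability $\beta\rho$ is admissible and the source queues remain stable.
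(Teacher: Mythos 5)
Your overall route is the same as the paper's: compute the steady-state scheduling rates $r_1,r_2,r_3$ of the Minimum Energy Algorithm to obtain the energy identity, then rerun the delayed Lyapunov drift argument of Theorem~\ref{thm:delay_bound} with the two cases (node $i$ a source vs.\ a relay for type $c$) and Lemma~\ref{lem:lem1}. The energy accounting reaches the right answer, though your parenthetical identity $\frac{1-\delta}{1+\delta}=\frac{1}{\beta}$ is false for $\delta=\frac{\beta-1}{2\beta}$ (the ratio equals $\frac{\beta+1}{3\beta-1}$); the correct statement, which your surrounding argument actually relies on, is only that a $1/\beta$ fraction of the provisioned relay opportunities is needed to carry the relayed traffic, so the surplus contributes $(\beta-1)\rho\frac{p-q}{\theta}$.

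The genuine gap is in your choice of $d$. You propose to reuse $\alpha\gamma^d=\frac{\delta}{2N^2}$ from Theorem~\ref{thm:delay_bound}, which here gives $\alpha\gamma^d=\frac{\beta-1}{4\beta N^2}$. That choice worked before because the mobility error $(r_1+r_2)\cdot 2N\alpha\gamma^d\le\mu\delta/N$ was automatically of the same order as the drift margin $\mu(1-\rho)/2=2\mu\delta$. Here the source-case margin is only $(r_1+r_2)-\lambda=\frac{R_1(p-q)\rho(\beta-1)}{4\theta}$, whereas the error term is at least $\frac{R_1q}{\theta}\cdot 2N\alpha\gamma^d=\frac{R_1q(\beta-1)}{2\beta\theta N}$ under your choice; whenever $q$ is not small compared with $N\beta\rho(p-q)$ (e.g., $\rho$ small with $q$ comparable to $p$), the error exceeds the margin, the drift is not negative, and neither stability nor the bound (\ref{eq:delay_bound3}) follows. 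The paper avoids this by taking $\alpha\gamma^d=\frac{(p-q)\rho\delta}{2(p+q)N^2}=\frac{(p-q)\rho(\beta-1)}{4\beta(p+q)N^2}$ --- note the extra factor $\frac{(p-q)\rho}{p+q}$ --- which forces $(r_1+r_2)\cdot 2N\alpha\gamma^d\le\frac{R_1(p-q)\rho\delta}{2\theta N}$, safely below the margin. This is exactly the $d=\Big\lceil\log\Big(\frac{4N^2(p+q)\alpha\beta}{(p-q)\rho(\beta-1)}\Big)/\log(1/\gamma)\Big\rceil$ appearing in the theorem statement, which your sketch does not match; with your smaller $d$ the claimed inequality is a strictly stronger (and unproven) assertion.
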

where ${B}$ is a constant given by (\ref{eq:B}) 
and $d$ is a finite integer that is related to the
mixing time of the joint user mobility process and is given by 
$d = \Big\lceil{\frac{\log\Big( \frac{4 N^2(p+q)\alpha\beta}{(p-q)\rho(\beta-1)}\Big)}{\log(1/\gamma)}}\Big\rceil$. 

From the above, it can be seen that the control parameter $\beta$ allows a $(O(\beta -1), O(1/(\beta - 1)))$ 
tradeoff between the average energy cost and the average delay bound. Specifically, the average energy cost
$\overline{e}$ can be pushed arbitrarily close to $\Phi(\lambda)$ by pushing $\beta$ closer to $1$. However, this increases
the bound on $\overline{D}$ as ${1}/(\beta-1)$.

\begin{proof} The proof is similar to the proof of Theorem \ref{thm:delay_bound} and is
given in Appendix C. 
\end{proof}


\section{Capacity Gains by Network Coding}
\label{section:nc_gains}

Here, we show an example 
where the network capacity
can be strictly improved by making use of network coding in conjunction with the wireless broadcast advantage. Specifically,
consider a network with $6$ nodes and $4$ cells. Suppose the steady-state location distribution for all nodes is uniform over all cells.
Thus, $\pi_c = 1/4$ for all $c$.
The one-to-one traffic pairing is given by $1 \leftrightarrow 2, 3 \leftrightarrow 4, 
5 \leftrightarrow 6$. Let $R_1 = 1$ and $R_2 = 0$. Thus, this example only allows same cell transmissions.
We further assume that when a node in a cell transmits, all other nodes in that cell receive that packet. Note that
the $2$-Hop Relay Algorithm presented in Sec. \ref{section:2hop_relay} does not make use of this feature. 

Using Theorem \ref{thm:capacity}, the network capacity under the model presented in Sec. \ref{section:model} can be computed. Specifically, the network
capacity is given by $\mu = \frac{q + p}{2\theta}$ packets/slot per node where $\theta = \frac{6}{4}$ and using (\ref{eq:prob_values}),
we have $q = 1 - \Big(1 - \frac{1}{16}\Big)^3$ and $p = 1 - \Big(1 - \frac{1}{4}\Big)^6 - \frac{6}{4}\Big(1 - \frac{1}{4}\Big)^5$.

  We now show how network coding can be used to achieve a throughput that is strictly higher than $\mu$. First we define $4$ distinct
configurations of the nodes. In configuration $\mathrm{I}$, nodes $1, 4,$ and $5$ are in the same cell and the other nodes can be in any of the remaining cells (but not in
the same cell as nodes $1, 4,$ and $5$). Note that this cell can be any one of the $4$ cells. From the assumption about the node mobility process, the
steady-state probability of configuration $\mathrm{I}$ is given by 
$\nu \defequiv 4\times (\frac{1}{4})^3\times(\frac{3}{4})^3$. 
In configuration $\mathrm{II}$, nodes $2, 3,$ and $5$ are in the same cell and the other nodes can be in any of the remaining cells (but not in
the same cell as nodes $2, 3,$ and $5$). In configuration $\mathrm{III}$, nodes $2, 4,$ and $5$ are in the same cell and the other nodes 
can be in any of the remaining cells (but not in the same cell as nodes $2, 4,$ and $5$). Finally, in 
configuration $\mathrm{IV}$, nodes $1, 3,$ and $5$ are in the same cell and the other nodes could be in any of the remaining cells (but not in
the same cell as nodes $1, 3,$ and $5$). Note that these configurations cannot occur simultaneously as each consists of node $5$. Further,
the steady-state probability of each configuration is given by $\nu$.

In the following, we will modify the $2$-Hop Relay Algorithm of Sec. \ref{section:2hop_relay} when one of these configurations
occur in any cell and demonstrate an improvement in the throughput of nodes $1, 2, 3$ and $4$ over $\mu$. For each configuration,
we will only focus on the transmissions in the cell with the three nodes that define that configuration. The 
$2$-Hop Relay Algorithm for the other cells remains the same.

Note that under each configuration, there are no source-destination pairs in the cell of interest. Thus,
 under the $2$-Hop Relay Algorithm, a node is selected as the transmitter with probability $\frac{1}{3}$ while the remaining two nodes are equally
likely to be selected as the receiver. Further, the transmitter is scheduled to
transmit a new packet to the receiver with probability $\frac{1-\delta}{2}$ and is scheduled to
transmit a relay packet to the receiver with probability $\frac{1+\delta}{2}$. 
Thus, in each configuration, each of the two nodes other than node $5$
is scheduled to transmit a new packet to node $5$ with probability $\frac{1}{3} \times \frac{1}{2} \times \frac{(1-\delta)}{2} = \frac{(1-\delta)}{12}$.
Also, in each configuration, node $5$ is scheduled to transmit a relay packet to each of the other two nodes in the cell with probability 
$\frac{1}{3} \times \frac{1}{2} \times \frac{(1+\delta)}{2} = \frac{(1+\delta)}{12}$. Adding the probabilities associated with these four scheduling decisions yields
\begin{align}
\frac{(1-\delta)}{12} + \frac{(1-\delta)}{12} + \frac{(1+\delta)}{12} + \frac{(1+\delta)}{12} = \frac{1}{3}
\label{eq:sum_probs}
\end{align}

We now modify the $2$-Hop Relay Algorithm to take advantage of network coding. For all configurations other than the four as defined above,
the algorithm remains the same. However, in each of the configurations $\mathrm{I, II, III, IV}$, we change the probability of scheduling a node to
transmit a new packet (for relaying) to node $5$ from 
$\frac{(1-\delta)}{12}$
to $\frac{1}{3} \times \frac{(1-\epsilon)}{3} = \frac{(1-\epsilon)}{9}$ 
where $0 < \epsilon < 1$.
Also, node $5$ is scheduled to transmit a relay packet to the other two nodes in the cell with probability   
$\frac{1}{3} \times \frac{(1+2\epsilon)}{3} = \frac{(1+2\epsilon)}{9}$. However, whenever node $5$ has at least one packet for each of the two other nodes,
it broadcasts a XOR of two packets destined for these nodes in a single transmission. If node $5$ does not have at least one packet for each of the two other 
nodes, it would simply transmit a regular packet (if available).
Note that under the original $2$-Hop Relay Algorithm, the two scheduling decisions of node $5$ transmitting a relay packet to 
the other two nodes are taken with probability $\frac{(1+\delta)}{12}$ each. 
These are now replaced by a \emph{single} scheduling decision of node $5$ broadcasting a XORed relay packet and this has probability $\frac{(1+2\epsilon)}{9}$.
The probabilities associated with the other scheduling decisions under this modified algorithm remain the same as the original
$2$-Hop Relay Algorithm. 
The sum of probabilities associated with the modified scheduling decisions as described above is given by
\begin{align}
\frac{(1-\epsilon)}{9} + \frac{(1-\epsilon)}{9} + \frac{(1+2\epsilon)}{9} = \frac{1}{3}
\label{eq:sum_probs_mod}
\end{align}
This is the same as (\ref{eq:sum_probs}).
Thus, it can be seen that the probabilities of all scheduling decisions under the modified algorithm sum to $1$.

\begin{figure}
\centering
\includegraphics[width=7.5cm, height = 7.6cm, angle=0]{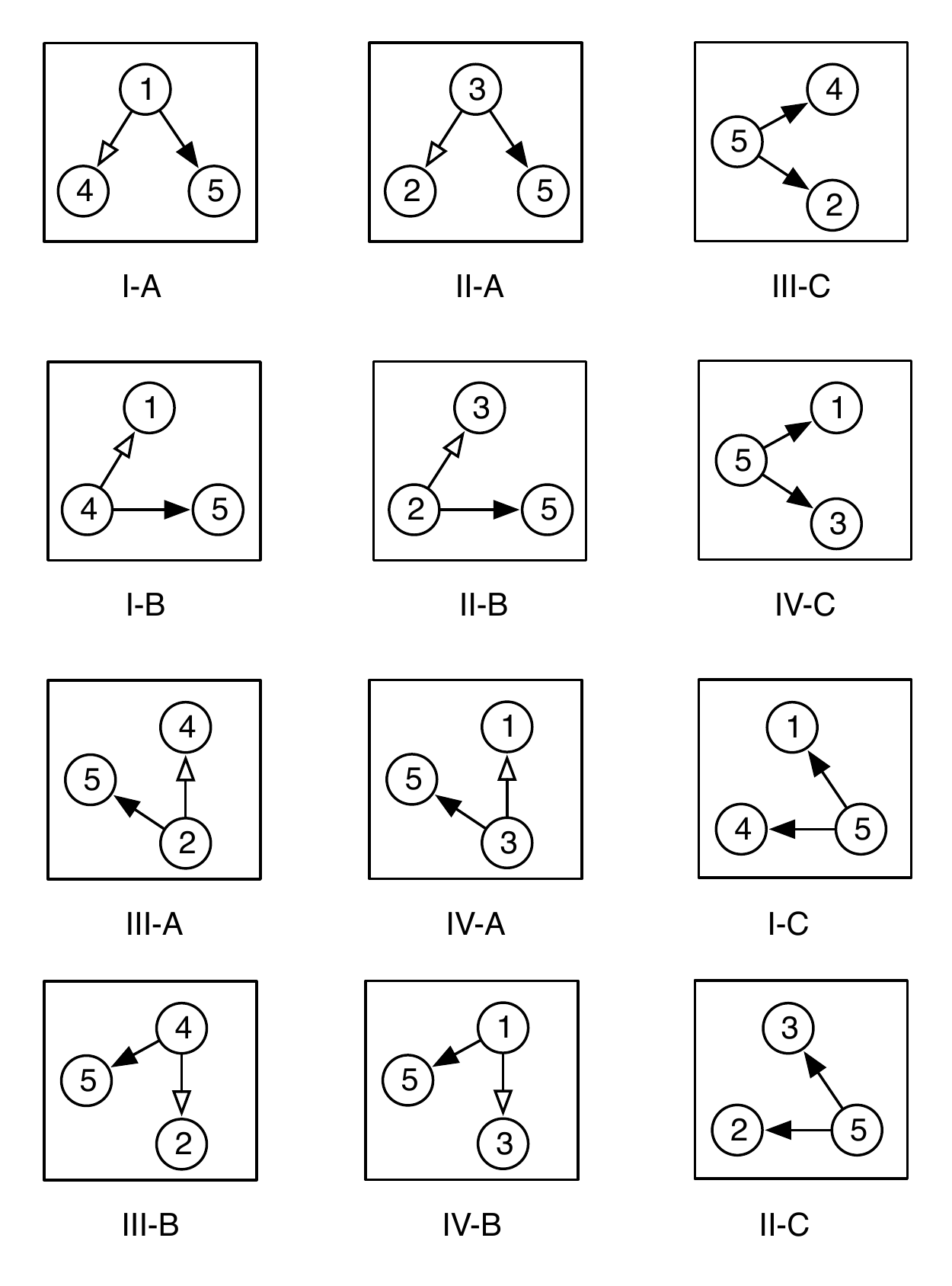}
\caption{An example showing capacity gains possible by using network coding 
in conjunction with the wireless broadcast advantage.}
\label{fig:3}
\end{figure}

   To see how the nodes can recover the original packets from the XORed packet, we further classify each configuration into type A, B and C 
depending on the scheduling decision as shown in Fig. \ref{fig:3}. The configurations of type A and B correspond to the scheduling decisions in which
a node is scheduled to transmit a new packet (for relaying) to node $5$. The configurations of type C correspond to the scheduling decisions in
which node $5$ is scheduled to transmit relay packets to the other two nodes (either as a network coded XOR packet whenever possible or a regular packet).
In each configuration of type A or B, whenever a new packet is transmitted by a node to node $5$ for relaying, the other node overhears the packet
and stores a copy. For example, in Fig. $\mathrm{I}$-A, when node $1$ transmits a new packet (destined for node $2$) to node $5$, 
node $4$ overhears this transmission and stores a copy of this packet. Similarly, in $\mathrm{II}$-A, when node $3$ transmits a new packet 
(destined for node $4$) to node $5$, node $2$ overhears this transmission and stores a copy of
this packet. In each configuration of type C, whenever node $5$ has at least one packet for each of the two other nodes, note that
each of these two nodes already has a copy of the packet destined for the other node (that it obtained by overhearing earlier in a type A or B configuration).
Therefore, when node $5$ transmits a XOR packet, both of these nodes can recover the original packets destined for them
by using the side information already available to them in the form of previously overheard and stored packets. 
For example, in $\mathrm{III}$-C, node $5$ is in the same cell as nodes $2$ and $4$ and suppose it has a packet for each of them.
Then, node $2$ must have the packet that is destined for node $4$ that it overheard in $\mathrm{II}$-A. Similarly,
node $4$ must have the packet that is destined for node $2$ that it overheard in $\mathrm{I}$-A.
Thus, when node $5$ broadcasts a XOR packet in a single transmission, both nodes $2$ and $4$ can retrieve their desired packets. 
Thus, this single transmission effectively 
delivers two packets. Note that under a scheme that does not allow mixing of packets, at most one packet can be transmitted per transmission.

To demonstrate gains in throughput under this ``network coding enhanced'' $2$-Hop Relay Algorithm, we define the following additional relay queues
at node $5$ as shown in Fig. \ref{fig:4}. Arrivals to and departures from these queues happen only when scheduling decisions corresponding to the $12$ 
configurations in Fig. \ref{fig:3} are made according to the enhanced $2$-Hop Relay Algorithm. 
$U_{ij}^{(i)}(t)$ and $U_{ij}^{(j)}(t)$ refer to the queue of packets destined for nodes $i$ and $j$
respectively that will be network coded whenever possible. Fig. \ref{fig:4} shows the arrival rates and the corresponding configurations
(when arrivals happen to these queues) as well as the service rates and corresponding configurations (when packets are served from these queues). 
Note that each queue has an arrival rate of $\frac{(1-\epsilon)\nu}{9}$ and
sees a service rate of $\frac{(1+2\epsilon)\nu}{9}$. 
Since  $(1+2\epsilon) > (1-\epsilon)$, all these
queues are stable. The additional throughput for nodes $1, 2, 3$ and $4$ over the $2$-Hop Relay Algorithm without network coding
is given by $\Big[\frac{2(1 - \epsilon)}{9} - \frac{2(1-\delta)}{12}\Big]\nu$ packets/slot. 
This is strictly positive for any $0 < \epsilon < \frac{1}{4}$. For example, by choosing $\epsilon = \frac{1}{8}$,
a throughput gain of $\frac{\nu}{36}$ packets/slot is achievable. Thus, the capacity can be strictly increased over a scheme that is restricted to pure routing.

\begin{figure}
\centering
\includegraphics[width=7.5cm,height = 7.6cm, angle=0]{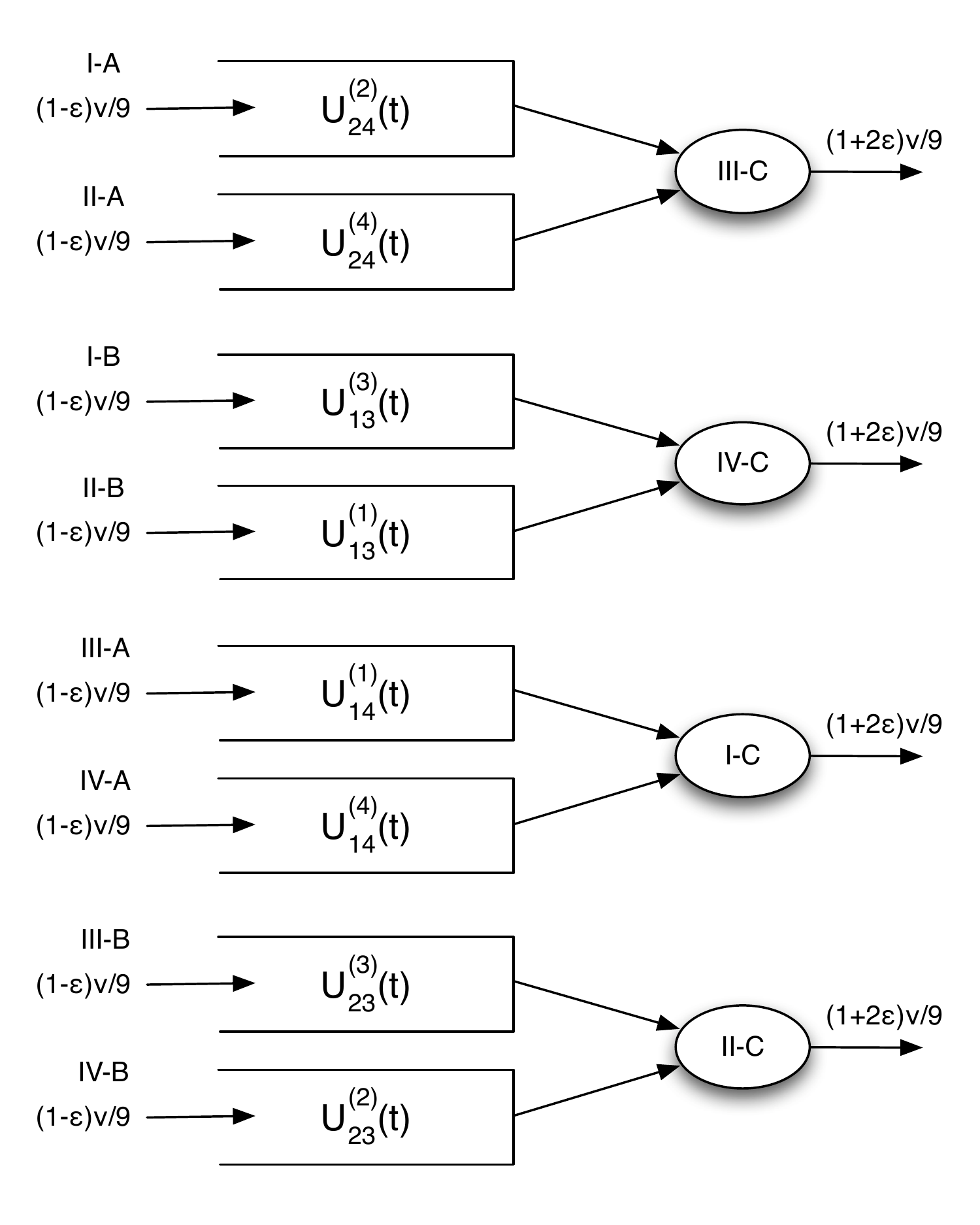}
\caption{Additional relay queues at node $5$ under the network coding enhanced $2$-Hop Relay Algorithm that is used 
in configurations $\mathrm{I, II, III, IV}$.} 
\label{fig:4}
\end{figure}

\section{Conclusions}
\label{section:con}

In this work, we investigated two quantities of fundamental interest
in a delay-tolerant mobile ad hoc network: the network capacity and the minimum
energy function. Using a cell-partitioned model of the network, we
obtained \emph{exact} expressions for both these quantities in terms
of the network parameters (number of nodes $N$ and number of cells
$C$) and the steady-state location distribution of the mobility
process. Our results hold for general mobility processes (possibly
non-uniform and non-i.i.d.) and our analytical technique can be
extended to other models with additional scheduling constraints.

We also proposed two simple scheduling strategies that can achieve
these bounds arbitrarily closely at the cost of an increased delay.
Both these schemes restrict packets to at most $2$ hops and
make scheduling decisions purely based on the current
user locations and independent of the actual queue backlogs. 
For both schemes, we computed bounds on the average packet delay
using a Lyapunov drift technique. 

In this paper, we have focused on network control algorithms that operate according
to the  network structure as presented in Sec. \ref{section:model}. We assumed that the packets
themselves are kept intact and are not combined or network coded. 
As shown in the example in Sec. \ref{section:nc_gains}, it is possible to increase the
network capacity by making use of network coding and the wireless
broadcast feature. An
interesting future direction of this research is to determine
the exact capacity region with such enhanced control options.

\section*{Appendix A \\ Proof of Lemma $1$}

Here, we prove the bound in Lemma \ref{lem:lem1}. 
We have
\begin{align*}
&\expect{f(\vec{\chi}(t+d))| \vec{\chi}(t)} = \sum_{\vec{c}} f(\vec{c}) \times Pr\{\vec{\chi}(t+d) = \vec{c} | \vec{\chi}(t)\} \\
&= \sum_{c_1, c_2, \ldots, c_N} f(c_1, c_2, \ldots, c_N) \Big[ \prod_{i=1}^N Pr\{\chi_i(t+d) = c_i | \vec{\chi}(t)\} \Big] \\
&\geq \sum_{c_1, c_2, \ldots, c_N} f(c_1, c_2, \ldots, c_N)\Big[ \prod_{i=1}^N \pi_{c_i} (1 - \alpha \gamma^d) \Big] \\
&= \sum_{c_1, c_2, \ldots, c_N} f(c_1, c_2, \ldots, c_N) \Big(\prod_{i=1}^N \pi_{c_i} \Big) (1 - \alpha \gamma^d)^N \\
&\geq \sum_{c_1, c_2, \ldots, c_N} f(c_1, c_2, \ldots, c_N)  \Big(\prod_{i=1}^N \pi_{c_i} \Big) (1 - 2N\alpha \gamma^d) \\
& = f_{av} (1 - 2N\alpha \gamma^d)   
\end{align*}
where step two follows from the independence of node mobility processes, step three follows from (\ref{eq:renewal}) and the
 second last step uses the inequality $(1 - \alpha \gamma^d)^N \geq (1 - 2N\alpha \gamma^d)$. This can be shown by induction as follows.
This holds for $N=1$. Suppose it holds for some integer $i > 1$, i.e., $(1 - \alpha \gamma^d)^i \geq (1 - 2i\alpha \gamma^d)$. Then, we have
\begin{align*}
(1 - \alpha \gamma^d)^{i+1} = (1 - \alpha \gamma^d)^{i}(1 - \alpha \gamma^d) &\geq (1 - 2i\alpha \gamma^d)(1 - \alpha \gamma^d) \\
&\geq (1 - 2(i+1)\alpha \gamma^d)
\end{align*}
The upper bound can be shown similarly, except that we use the inequality 
$(1 + \alpha \gamma^d)^N \leq (1 + 2N\alpha \gamma^d)$ for all $N \geq 2$ whenever
$d$ is such that $\alpha \gamma^d < 1/N^2$. To show this, 
let $\alpha \gamma^d = c/N^2$ where $0 < c < 1$. Note that $0 < c/N < 1$. Then 
\begin{align*}
&(1 + \alpha \gamma^d)^{N} \\
&= 1 + N \alpha \gamma^d + \frac{N(N-1)}{2} (\alpha \gamma^d)^2 + \ldots + (\alpha \gamma^d)^N \\
&< 1 + \frac{c}{N} + \Big(\frac{c}{N}\Big)^2 + \ldots + \Big(\frac{c}{N}\Big)^N < \frac{1}{1 - \frac{c}{N}} = \frac{N}{N-c}\\
&< 1 + \frac{c}{N-1} = 1 + \frac{N^2 \alpha \gamma^d}{N-1}\leq 1 + 2N \alpha \gamma^d \qquad \forall N \geq 2
\end{align*}

\section*{Appendix B \\ Derivation of Probability Expressions}

In what follows, we will use the linearity of expectations property
to compute the probability expressions in
(\ref{eq:prob_values}).

 \emph{Derivation of $q$:} Let $I_c(t)$ be an indicator variable that is
$1$ if there is a source-destination pair in cell $c$ in slot $t$ in
the steady-state. Then the expected number of cells with a
source-destination pair is given by $\expect{\sum_{c=1}^C I_c(t)}$.
By linearity of expectations, this is equal to $\sum_{c=1}^C
\expect{I_c(t)}$. To compute $\expect{I_c(t)}$ for any cell $c$,
note that by the independence of user mobility processes, $\pi_c^2$
is the probability of finding any particular source-destination pair
in cell $c$ in the steady-state. Since there are $N/2$ such pairs and they occur independent of each other,
the probability of finding no source-destination pair in cell $c$ in
the steady-state is given by $(1-\pi_c^2)^{N/2}$. Thus, the
probability of finding \emph{at least} $1$ source-destination pair
is $1 - (1-\pi_c^2)^{N/2}$. Using this, we get $q =
\frac{1}{C}\sum_{c=1}^C (1 - (1-\pi_c^2)^{N/2})$.

 \emph{Derivation of $p$:} To compute the probability of finding at least
$2$ users in a cell $c$, we note that this can be obtained by first
computing the probabilities of finding no user and exactly $1$ user
in cell $c$ and then subtracting these from $1$. These are given by
$(1-\pi_c)^{N}$ and $\binom{N}{1} \pi_c(1-\pi_c)^{N-1}$
respectively. Using this, we get $p = \frac{1}{C}\sum_{c=1}^C (1 -
(1-\pi_c)^{N} - N\pi_c(1-\pi_c)^{N-1})$.

 \emph{Derivation of $q'$:} The probability of finding exactly $1$ user
in cell $c$ is given by $\binom{N}{1}\pi_c (1-\pi_c)^{N-1}$. The
probability of finding its destination in an adjacent cell
\emph{given that} it is not it cell $c$ is given by $\frac{1}{1 -
\pi_c}\sum_{i\in \mathcal{B}_c} \pi_i$ which we have defined as
$\Pi_{adj}(c)$. Using this, we get $q' = \frac{1}{C}\sum_{c=1}^C
(\Pi_{adj}(c)N\pi_c (1-\pi_c)^{N-1})$.

 \emph{Derivation of $p'$:} Given that there is exactly $1$ user in cell
$c$, the probability that at least $1$ of the remaining $N-1$ users
is in an adjacent cell is given by $1 - (1 - \Pi_{adj}(c))^{N-1}$.
Thus, we get $p' = \frac{1}{C}\sum_{c=1}^C (1 - (1 -
\Pi_{adj}(c))^{N-1})N\pi_c (1 - \pi_c)^{N-1}$.

 \emph{Derivation of $q''$:} We first compute the probability of finding
$i$ users in cell $c$ such that there are no source-destination
pairs. Clearly, $1 \leq i \leq \frac{N}{2}$ since there must be at
least $1$ source-destination pair for $i > \frac{N}{2}$. Next, note
that $2^i \frac{\binom{N/2}{i}}{\binom{N}{i}}$ is the probability of
finding no source-destination pair in a cell \emph{given that} there
are $i$ users in that cell. $\binom{N}{i}\pi_c^i (1 -
\pi_c)^{N-i}$ is the probability of having $i$ users in cell $c$.
Finally, the probability that there is at least $1$ node in an
adjacent cell that will make a source-destination pair with one of
these $i$ nodes \emph{given that} it is not in cell $c$ is given by
$(1 - (1 - \Pi_{adj}(c))^{i})$. Combining all these, we get 
\begin{align*}
q'' = \frac{1}{C} \sum_{c=1}^C \sum_{i=1}^{N/2} 2^i\binom{N/2}{i} \pi_c^i
(1 - \pi_c)^{N-i} (1 - (1 - \Pi_{adj}(c))^{i})
\end{align*}

 \emph{Derivation of $p''$:} Similar to the derivation of $q''$, the
probability of finding $i$ users in cell $c$ such that there are no
source-destination pairs in cell $c$ as well as any adjacent cells
is given by $2^i \binom{N/2}{i} \pi_c^i (1 - \pi_c)^{N-i} (1 -
\Pi_{adj}(c))^{i}$. Since we also want at least $2$ users in cell
$c$, we sum from $i=2$ to $N/2$. This yields 
\begin{align*}
p'' = \frac{1}{C} \sum_{c=1}^C \sum_{i=2}^{N/2} 2^i \binom{N/2}{i} \pi_c^i (1 - \pi_c)^{N-i} (1 - \Pi_{adj}(c))^{i}
\end{align*}

\section*{Appendix C \\ Proof of Theorem \ref{thm:energy_delay_bound}}

Here, we establish the bounds (\ref{eq:energy_bound}) and (\ref{eq:delay_bound3}).

When $\frac{R_1q}{\theta} < \lambda < \frac{R_1(p+q)}{2\theta}$,
under the Minimum Energy Algorithm, all transmissions are either same cell direct transmissions or same cell
relay transmissions. Specifically,  each user either transmits directly to its destination or transmits
new packets to a relay or transmits relayed packets to their
destination in the same cell. 
Each such transmission involves one unit of energy cost and
therefore the average energy cost per user $\overline{e}$ can be expressed in terms of
the rates of these transmission opportunities. The rate at which same cell direct transmissions
are scheduled is given by $C{q}$. The rate at which same cell relay transmissions
are scheduled is given by $\beta\rho C{(p-q)}$. 
Thus, we have:
\begin{align*}
\overline{e}  &= \frac{1}{N} \Big[C{q} + \beta\rho C(p-q)\Big] = \frac{q}{\theta} + \rho \frac{p-q}{\theta} + (\beta -1)\rho \frac{p-q}{\theta} \\
&= \Phi(\lambda) + (\beta-1)\rho \Big(\frac{p-q}{\theta}\Big)
\end{align*}
Thus, $\overline{e}$ can be pushed arbitrarily close to $\Phi(\lambda)$ by choosing $\beta$ close to $1$.

The delay of the Minimum Energy Algorithm can be analyzed using a procedure similar
to the one used in the proof of Theorem \ref{thm:delay_bound}. 
We first evaluate bounds on the expression in (\ref{eq:delayed}) by computing
the steady-state service rates $\overline{\mu}_{ib}^{(c)}, \overline{\mu}_{ai}^{(c)}$ 
achieved by the Minimum Energy Algorithm. We have the following $2$ cases:

$1)$ Node $i$ generates type $c$ packets: In this case,
$\expect{A_{i}^{(c)}(t)} = \lambda$ and  $\sum_a {\mu}_{ai}^{(c)}(t) = 0$.
To calculate $\sum_b \overline{\mu}_{ib}^{(c)}$, 
define $r_1, r_2, r_3$ similar to that in the proof of Theorem \ref{thm:delay_bound}.
Then, the total rate of transmission over the
network is given by $N(r_1 + r_2 + r_3)$. 
Similar to Theorem \ref{thm:delay_bound}, we have
$r_2 = \frac{1-\delta}{1 + \delta}r_3$. Since only same cell
direct transmissions are used, we have $Nr_1 = CR_1q$.
Also, a same cell relay transmission is scheduled with probability $\beta \rho$ 
whenever there is no source-destination pair in the cell but there
are at least $2$ users in the cell,
Thus, the sum total transmissions in the network can be
expressed in terms of the quantities $p$ and $q$ as
$N(r_1 + r_2 + r_3) = C(R_1q + R_1\beta\rho (p-q))$.
Solving for $r_1, r_2, r_3$, we have: 
\begin{align}
&r_1 = \frac{R_1 q}{\theta}, r_2 = \frac{R_1 (p-q)(1-\delta)\beta \rho}{2\theta}\nonumber \\&
 r_3 = \frac{R_1 (p-q) (1+\delta)\beta \rho }{2\theta}
\label{eq:r3_e3}
\end{align}
Therefore, we have:
\begin{align*}
\sum_b \overline{\mu}_{ib}^{(c)} = r_1 + r_2 = \frac{R_1 q}{\theta} + \frac{R_1 (p-q)(1-\delta)\beta \rho}{2\theta}
\end{align*}
Let $\delta = \frac{\beta-1}{2\beta}$ and  $\alpha \gamma^d = \frac{(p-q)\rho\delta}{2(p+q)N^2} = \frac{(p-q)\rho(\beta-1)}{4\beta(p+q)N^2} < \frac{1}{N^2}$.
Note that this choice of $\delta$ can be shown to represent a valid probability,
because $1 < \beta < \frac{1}{\rho} \Rightarrow 0 < \frac{1}{2} - \frac{1}{2 \beta} < \frac{1}{2} - \frac{\rho}{2} \Rightarrow 0 < \delta < 1$.
Then, using (\ref{eq:stat6-1}), the last term of (\ref{eq:delayed}) under this case can be expressed as:
\begin{align*}
&\expect{\sum_b \mu_{ib}^{(c)}(t) - \sum_a \mu_{ai}^{(c)}(t) - A_i^{(c)}(t) |\vec{U}(t-d)} \geq \\
&(r_1 + r_2)(1 -  2N\alpha \gamma^d ) - \lambda \geq (r_1 + r_2) - \frac{R_1(p-q)\rho\delta}{2 \theta N} - \lambda \\
&= \frac{R_1 (p-q) \rho}{2 \theta} \Big[(1 - \delta)\beta - \frac{\delta}{N} - 1 \Big] \geq \frac{R_1 (p-q) \rho (\beta-1)}{8 \theta} 
\end{align*}
where we used the relations $\lambda = r_1 + \frac{r_2}{(1-\delta)\beta}$,
$(r_1 + r_2)2N\alpha \gamma^d \leq \frac{R_1(p-q)\rho\delta}{2 \theta N}$ and 
$(1 - \delta)\beta - \frac{\delta}{N} - 1 \geq \frac{\beta -1}{4}$.
These can be shown as follows.
Using (\ref{eq:r3_e3}), we have $\lambda = \frac{R_1q}{\theta} + \rho \frac{R_1(p-q)}{2\theta} = r_1 + \frac{r_2}{(1-\delta)\beta}$.
Next:
\begin{align*}
&(r_1 + r_2)2N\alpha \gamma^d = \Big(\frac{R_1 q}{\theta} + \frac{R_1 (p-q)(1-\delta)\beta \rho}{2\theta}\Big) \frac{(p-q)\rho\delta}{(p+q)N}\\
&< \Big(\frac{R_1 q}{\theta} + \frac{R_1 (p-q)}{2\theta}\Big) \frac{(p-q)\rho\delta}{(p+q)N} \qquad (\textrm{since $(1-\delta)\beta\rho < 1$}) \\
&= \frac{R_1 (p-q)\rho \delta}{2\theta N}
\end{align*}
Finally, using $\delta = \frac{\beta - 1}{2\beta}$, we have:
\begin{align*}
&(1 - \delta)\beta - \frac{\delta}{N} - 1 = \frac{\beta + 1}{2} - \frac{\delta}{N} - 1 \geq \frac{\beta - 1}{2} - \frac{\delta}{2} \\
&= \frac{\beta - 1}{2} - \frac{\beta - 1}{4\beta} \geq \frac{\beta -1}{4}
\end{align*}


$2)$ Node $i$ relays type $c$ packets: From our traffic model, we know that in this case
$A_{i}^{(c)}(t) = 0$ for all $t$. 
To compute $\sum_b \overline{\mu}_{ib}^{(c)}$ and $\sum_a \overline{\mu}_{ai}^{(c)}$, note
that the Minimum Energy Algorithm schedules relay transmissions such that all $N-2$ relay packet types are
equally likely. Thus we have:
\begin{align*}
\sum_b \overline{\mu}_{ib}^{(c)} = \frac{r_3}{N-2}, \;\; \sum_a \overline{\mu}_{ai}^{(c)} = \frac{r_2}{N-2}
\end{align*}
Let $\delta = \frac{\beta-1}{2\beta}$ and  $\alpha \gamma^d = \frac{(p-q)\rho\delta}{2(p+q)N^2} = \frac{(p-q)\rho(\beta-1)}{4\beta(p+q)N^2} < \frac{1}{N^2}$.
Then, using (\ref{eq:stat6-1}), (\ref{eq:stat6-2}), the last term of (\ref{eq:delayed}) under this case can be expressed as:
\begin{align*}
&\expect{\sum_b \mu_{ib}^{(c)}(t) - \sum_a \mu_{ai}^{(c)}(t) - A_i^{(c)}(t) |\vec{U}(t-d)} \\
&\geq \Big( \sum_b \overline{\mu}_{ib}^{(c)} \Big) (1 -  2N\alpha \gamma^d ) - \Big( \sum_a \overline{\mu}_{ai}^{(c)} \Big) (1 +  2N\alpha \gamma^d ) \\
&= \Big( \frac{r_3 - r_2}{N-2} \Big) - \Big( \frac{r_3 + r_2}{N-2} \Big)2N\alpha \gamma^d \\
&\geq \frac{R_1 (p-q)\rho \beta }{\theta N} \Big[ \delta - \frac{\delta}{N} \Big] \geq \frac{R_1 (p-q) \rho (\beta-1)}{4N\theta}
\end{align*}
where we used the inequality $2N\alpha \gamma^d < \frac{\delta}{N}$.
Combining these two cases, with $\delta = \frac{\beta-1}{2\beta}$ and 
$\alpha \gamma^d = \frac{(p-q)\rho(\beta-1)}{4\beta(p+q)N^2}$
we have $\expect{\sum_b \mu_{ib}^{(c)}(t) - \sum_a \mu_{ai}^{(c)}(t) - A_{i}^{(c)}(t)|\vec{U}(t-d)} \geq \frac{R_1 (p-q) \rho (\beta-1)}{4N\theta}$.
Using this in (\ref{eq:delayed}), we get:
\begin{align*}
\Delta(t, d) \leq & BN(2d+1) \nonumber \\  
&- \frac{R_1 (p-q) \rho (\beta-1)}{4N\theta} \sum_{i=1}^N\sum_{c \neq i} {U^{(c)}_i(t-d)} 
\end{align*}
This is in a form that fits (\ref{eq:ldlemma}). Using the Lyapunov
Drift Lemma, we get

\begin{align*}
\limsup_{t\rightarrow\infty} \frac{1}{t}\sum_{\tau=0}^{t-1}\sum_{i
\neq c}\expect{U_i^{(c)}(\tau) } \leq \frac{4BN^2\theta (2d+1)}{ R_1 (p-q) \rho (\beta-1)}
\end{align*}
The total input rate into the network is $N\lambda$. Thus,  using
Little's Theorem, the average delay per packet is bounded by
$\frac{4BN\theta (2d+1)}{\lambda R_1 (p-q) \rho (\beta-1)}$.

\end{document}